\newcommand{\arXiv}[2]{arXiv:\href{http://arxiv.org/abs/#1}{#1 #2}}
\def\cA{\mathcal{A}}
\def\cD{\mathcal{D}}
\def\cE{\mathcal{E}}
\def\cH{\mathcal{H}}
\def\cN{\mathcal{N}}
\def\cS{\mathcal{S}}
\def\C{\mathbb{C}}
\def\CP{\mathbb{CP}}
\def\N{\mathbb{N}}
\def\R{\mathbb{R}}
\def\Z{\mathbb{Z}}
\DeclareMathOperator{\sgn}{sgn}
\newcommand{\vect}[1]{\boldsymbol{#1}}
\theoremstyle{plain}
\newtheorem{lemma}{Lemma}
\newtheorem{proposition}{Proposition}
\newtheorem{theorem}{Theorem}
\newtheorem{corollary}{Corollary}
\begin{document}

\title{Quantization and explicit diagonalization of new compactified trigonometric Ruijsenaars-Schneider systems}

\author{
Tam\'{a}s F. G\"{o}rbe\footnote{tfgorbe@physx.u-szeged.hu}\\
Department of Theoretical Physics\\
University of Szeged\\
Tisza Lajos krt 84-86, H-6720 Szeged, Hungary\\
and\\
Martin A. Halln\"{a}s\footnote{hallnas@chalmers.se}\\
Department of Mathematical Sciences\\
Chalmers University of Technology and the University of Gothenburg\\
SE-412 96 Gothenburg, Sweden}

\date{\today}

\maketitle

\begin{abstract}\noindent
Recently, Feh\'{e}r and Kluck discovered, at the level of classical mechanics, new compactified trigonometric Ruijsenaars-Schneider $n$-particle systems, with phase space symplectomorphic to the $(n-1)$-dimensional complex projective space. In this article, we quantize the so-called type (i) instances of these systems and explicitly solve the joint eigenvalue problem for the corresponding quantum Hamiltonians by generalising previous results of van Diejen and Vinet. Specifically, the quantum Hamiltonians are realized as discrete difference operators acting in a finite-dimensional Hilbert space of complex-valued functions supported on a uniform lattice over the classical configuration space, and their joint eigenfunctions are constructed in terms of discretized $A_{n-1}$ Macdonald polynomials with unitary parameters.
\end{abstract}

\newpage

\section{Introduction}\label{sec:intro}
The many-body systems introduced by Ruijsenaars and Schneider \cite{RS} are remarkable examples of integrable and exactly solvable one-dimensional $n$-particle models, which contain systems of Calogero-Moser-Sutherland and Toda type as limiting cases and have strong ties with numerous areas of both mathematics and physics. See, e.g., \cite{vDV00,Ne99,Rui90,Rui99} for reviews. Although Ruijsenaars-Schneider (RS) systems have now been studied for over three decades, they continue to attract attention as new applications and new variants are being found. A case in point is Feh\'{e}r and Kluck's recent discovery \cite{FKl} of novel instances of so-called compactified trigonometric RS system at the level of classical mechanics. In this paper, we introduce the quantum versions of these models (given by any so-called type (i) value of the coupling parameter that satisfies a natural quantisation condition) and explicitly solve the corresponding joint eigenvalue problems by generalising earlier results of van Diejen and Vinet \cite{vDV}.

To explain our results in more detail, we recall that the standard trigonometric RS system is defined by the Hamiltonian 
$$
\tilde{H}(g;\vect{x},\vect{p})=\sum_{j=1}^n\cosh(\beta p_j)
\sqrt{\displaystyle\prod_{\substack{k=1\\k\neq j}}^n
\bigg(1+\frac{\sinh^2\big(\frac{\alpha\beta g}{2}\big)}
{\sin^2\frac{\alpha}{2}(x_j-x_k)}\bigg)},
$$
with generalised coordinates $\vect{x}=(x_1,\dots,x_n)$, conjugate momenta $\vect{p}=(p_1,\dots,p_n)$ and real-valued scale parameters $\alpha,\beta$ and coupling parameter $g$. Its phase space can be chosen to be the unbounded domain $\{(\vect{x},\vect{p})\in\R^{2n}\mid x_1>\dots>x_n\}$. Applying a Wick rotation, i.e., replacing the real parameter $\beta$ with the imaginary $\mathrm{i}\beta$ in $\tilde{H}$, leads to the Hamiltonian we are interested in, namely the Hamiltonian of the \emph{compactified} trigonometric RS system,
\begin{equation}\label{H}
H(g;\vect{x},\vect{p})=\sum_{j=1}^n\cos(\beta p_j)
\sqrt{\displaystyle\prod_{\substack{k=1\\k\neq j}}^n
\bigg(1-\frac{\sin^2\big(\frac{\alpha\beta g}{2}\big)}
{\sin^2\frac{\alpha}{2}(x_j-x_k)}\bigg)},
\end{equation}
which was first introduced and studied by Ruijsenaars \cite{Rui90,Rui}. Note that the hyperbolic functions turned into their trigonometric counterparts and there is a sign change under the square root. These important differences entail a Hamiltonian that is periodic not only in the coordinates $\vect{x}$ but also in the momenta $\vect{p}$, and the unbounded phase space we had previously is naturally replaced by a bounded phase space, which, in fact, becomes compact after a suitable completion.

From here onwards, we assume, without loss of generality, that $\alpha>0$ and work with
$$
\beta=1,
$$
but the parameter $\beta$ is easily reintroduced by substituting $\alpha\to\alpha\beta$, $x_j\to\beta^{-1}x_j$ and $p_j\to\beta p_j$. Note that the Hamiltonian $H$ (with $\beta=1$) is $2\pi/\alpha$-periodic in $g$ and $g=0$ corresponds to a system of free particles. Therefore, it suffices to consider couplings satisfying\footnote{\label{footnote:1}The range of couplings yielding genuinely different dynamics is $(0,\pi/\alpha)$, since $H$ is invariant with respect to the replacement $g\to 2\pi/\alpha-g$. However, we find it more convenient to work with \eqref{g}.}
\begin{equation}\label{g}
0<g<\frac{2\pi}{\alpha}.
\end{equation}

The $n$-particle compactified trigonometric RS system is Liouville integrable \cite{Rui87}, i.e., there exists a Poisson commuting\footnote{With respect to the canonical symplectic form $\sum_{j=1}^ndx_j\wedge dp_j$.} set of $n$ independent functions $H_1,\dots,H_n$ to which $H$ \eqref{H} belongs. More specifically, one can take
\begin{equation}\label{Hr}
H_r(g;\vect{x},\vect{p})=\sum_{\substack{J\subset\{1,\dots,n\}\\|J|=r}}
s(g;J)\cos(\textstyle\sum_{j\in J}p_j)
\sqrt{\displaystyle\prod_{\substack{j\in J\\k\notin J}}
\bigg(1-\frac{\sin^2\big(\frac{\alpha g}{2}\big)}{\sin^2\frac{\alpha}{2}(x_j-x_k)}\bigg)},\quad r=1,\dots,n,
\end{equation}
where the sign-factors $s(g;J)\in\{1,-1\}$ (to be defined later) satisfy $s(g;\{1\})=\dots=s(g;\{n\})$ and $s(g;\{1,\dots,n\})=1$ for all values of $g$. The first relation entails that $H_1$ \eqref{Hr} is equal to $H$ \eqref{H} (up to an overall sign), while the second one means that
$$
H_n(g;\vect{x},\vect{p})=\cos(p_1+\dots+p_n).
$$
This entails the conservation of the `total momentum' $p_1+\dots+p_n$. Since the center-of-mass $(x_1+\cdots+x_n)/n$ evolves linearly in time (see, e.g., \cite{RS}), it is thus natural to impose $p_1+\cdots+p_n=0$ and restrict attention to the relative motion in the center-of-mass hyperplane
$$
E_n=\{\vect{x}\in\R^n\mid x_1+\dots+x_n=0\},
$$
which is governed by reduced Hamiltonians $\cH_r$, $r=1,\dots,n-1$.

When first introducing the compactified trigonometric RS system, Ruijsenaars considered the configuration space given by
\begin{equation}\label{Rconfig}
x_j-x_{j+1}>g,\ \ j=1,\dots,n-1,\quad x_1-x_n<\frac{2\pi}{\alpha}-g.
\end{equation}
This ensures that all factors under the square roots in \eqref{H} and \eqref{Hr} are positive, so that the reduced Hamiltonians $\cH_r(\vect{x},\vect{p})$ are real-valued and smooth, but clearly requires
\begin{equation}\label{standard-g}
0<g<\frac{2\pi}{\alpha n}.
\end{equation}
As will become clear in Subsection \ref{sec:reduced}, in this case all signs $s(g;J)$ in \eqref{Hr} are equal to one. Due to the periodicity of $\cH_r(\vect{x},\vect{p})$ in both $\vect{x}$ and $\vect{p}$, it is natural to employ a bounded phase space, which, as Ruijsenaars \cite{Rui} showed, can be embedded densely and symplectically in the $(n-1)$-dimensional complex projective space equipped with a multiple of the standard Fubini-Study symplectic form.

By allowing generic $g$-values \eqref{g}, Feh\'{e}r and Kluck \cite{FKl} recently obtained new compactified trigonometric Ruijsenaars-Schneider systems. As the coupling parameter $g$ varies, the pertinent (local) configuration space, which always contains the equal-distance configuration
$$
x_j-x_{j+1}=2\pi/\alpha n,\quad j=1,\ldots,n-1,
$$
can change between two drastically different forms. The corresponding parameter values, referred to as type (i) and (ii), form disjoint open subintervals that (except for a finite number of excluded points) partition the interval $(0,2\pi/\alpha)$, see Fig. \ref{Fig:gintervals}. Fixing the number of particles $n\geq 2$, the type (i) couplings can be described as follows. Choose a parameter $p\in\{1,\dots,n-1\}$ such that $p$ is coprime to $n$, i.e.~$\gcd(n,p)=1$, and let $q$ be the multiplicative inverse of $p$ in the ring $\Z_n$, that is $pq\equiv 1\pmod{n}$. Then $g$ can be picked from a certain interval punctured at $2\pi p/\alpha n$, namely $g$ has to satisfy either
\begin{equation}\label{typei}
\frac{2\pi}{\alpha}\bigg(\frac{p}{n}-\frac{1}{nq}\bigg)<
g<\frac{2\pi}{\alpha}\frac{p}{n}\quad\text{or}\quad
\frac{2\pi}{\alpha}\frac{p}{n}<
g<\frac{2\pi}{\alpha}\bigg(\frac{p}{n}+\frac{1}{n(n-q)}\bigg).
\end{equation}
Note that setting $p=q=1$ in the first formula of \eqref{typei} produces the interval \eqref{standard-g}. New models with type (i) couplings arise for $n\geq 3$, while new type (ii) models first appear at $n=4$, see Fig. \ref{Fig:gintervals}. In the type (i) cases, the closure of the configuration space is always contained within the Weyl alcove
\begin{equation}\label{An}
\cA_n=\{\vect{x}\in E_n\mid x_1>\dots>x_n>x_1-2\pi/\alpha\}
\end{equation}
and the particles cannot collide. This is no longer the case for type (ii) couplings.

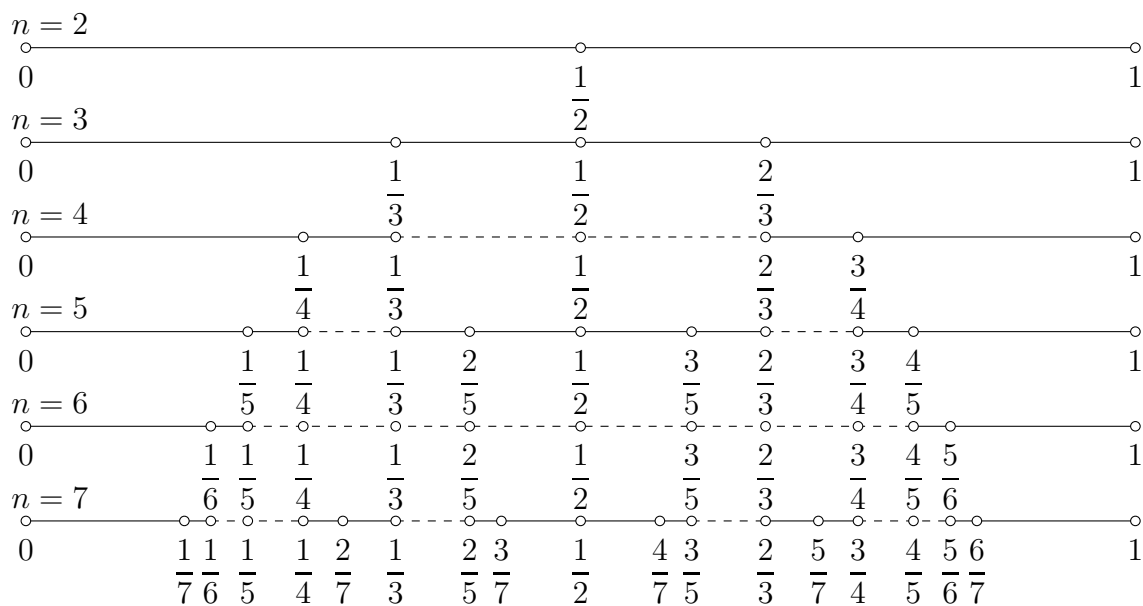
\begin{figure}[H]
\centering
\begin{tikzpicture}[scale=.76]
\def\s{1.2\textwidth}
\def\r{.2em}
\draw(1em,9em) node{$n=2$};
\draw (0,8em)--({\s},8em);
\draw[black,fill=white]
(0,8em) circle(\r) node[below,yshift=-1mm]{$0$}
(\s/2,8em) circle(\r) node[below,yshift=-1mm]{$\displaystyle\frac{1}{2}$}
(\s,8em) circle(\r) node[below,yshift=-1mm]{$1$};
\draw(1em,5em) node{$n=3$};
\draw (0,4em)--({\s},4em);
\draw[black,fill=white]
(0,4em) circle(\r) node[below,yshift=-1mm]{$0$}
(\s/3,4em) circle(\r) node[below,yshift=-1mm]{$\displaystyle\frac{1}{3}$}
(\s/2,4em) circle(\r) node[below,yshift=-1mm]{$\displaystyle\frac{1}{2}$}
(2*\s/3,4em) circle(\r) node[below,yshift=-1mm]{$\displaystyle\frac{2}{3}$}
(\s,4em) circle(\r) node[below,yshift=-1mm]{$1$};
\draw(1em,1em) node{$n=4$};
\draw (0,0)--({\s/3},0) ({2*\s/3},0)--({\s},0);
\draw[dashed] ({\s/3},0)--({2*\s/3},0);
\draw[black,fill=white]
(0,0) circle(\r) node[below,yshift=-1mm]{$0$}
(\s/4,0) circle(\r) node[below,yshift=-1mm]{$\displaystyle\frac{1}{4}$}
(\s/3,0) circle(\r) node[below,yshift=-1mm]{$\displaystyle\frac{1}{3}$}
(\s/2,0) circle(\r) node[below,yshift=-1mm]{$\displaystyle\frac{1}{2}$}
(2*\s/3,0) circle(\r) node[below,yshift=-1mm]{$\displaystyle\frac{2}{3}$}
(3*\s/4,0) circle(\r) node[below,yshift=-1mm]{$\displaystyle\frac{3}{4}$}
(\s,0) circle(\r) node[below,yshift=-1mm]{$1$};
\draw(1em,-3em) node{$n=5$};
\draw (0,-4em)--(\s/4,-4em) (\s/3,-4em)--(2*\s/3,-4em) (3*\s/4,-4em)--(\s,-4em);
\draw[dashed] (\s/4,-4em)--(\s/3,-4em) (2*\s/3,-4em)--(3*\s/4,-4em);
\draw[black,fill=white]
(0,-4em) circle(\r) node[below,yshift=-1mm]{$0$}
(\s/5,-4em) circle(\r) node[below,yshift=-1mm]{$\displaystyle\frac{1}{5}$}
(\s/4,-4em) circle(\r) node[below,yshift=-1mm]{$\displaystyle\frac{1}{4}$}
(\s/3,-4em) circle(\r) node[below,yshift=-1mm]{$\displaystyle\frac{1}{3}$}
(2*\s/5,-4em) circle(\r) node[below,yshift=-1mm]{$\displaystyle\frac{2}{5}$}
(\s/2,-4em) circle(\r) node[below,yshift=-1mm]{$\displaystyle\frac{1}{2}$}
(3*\s/5,-4em) circle(\r) node[below,yshift=-1mm]{$\displaystyle\frac{3}{5}$}
(2*\s/3,-4em) circle(\r) node[below,yshift=-1mm]{$\displaystyle\frac{2}{3}$}
(3*\s/4,-4em) circle(\r) node[below,yshift=-1mm]{$\displaystyle\frac{3}{4}$}
(4*\s/5,-4em) circle(\r) node[below,yshift=-1mm]{$\displaystyle\frac{4}{5}$}
(\s,-4em) circle(\r) node[below,yshift=-1mm]{$1$};
\draw(1em,-7em) node{$n=6$};
\draw (0,-8em)--(\s/5,-8em) (4*\s/5,-8em)--(\s,-8em);
\draw[dashed] (\s/5,-8em)--(4*\s/5,-8em);
\draw[black,fill=white]
(0,-8em) circle(\r) node[below,yshift=-1mm]{$0$}
(\s/6,-8em) circle(\r) node[below,yshift=-1mm]{$\displaystyle\frac{1}{6}$}
(\s/5,-8em) circle(\r) node[below,yshift=-1mm]{$\displaystyle\frac{1}{5}$}
(\s/4,-8em) circle(\r) node[below,yshift=-1mm]{$\displaystyle\frac{1}{4}$}
(\s/3,-8em) circle(\r) node[below,yshift=-1mm]{$\displaystyle\frac{1}{3}$}
(2*\s/5,-8em) circle(\r) node[below,yshift=-1mm]{$\displaystyle\frac{2}{5}$}
(\s/2,-8em) circle(\r) node[below,yshift=-1mm]{$\displaystyle\frac{1}{2}$}
(3*\s/5,-8em) circle(\r) node[below,yshift=-1mm]{$\displaystyle\frac{3}{5}$}
(2*\s/3,-8em) circle(\r) node[below,yshift=-1mm]{$\displaystyle\frac{2}{3}$}
(3*\s/4,-8em) circle(\r) node[below,yshift=-1mm]{$\displaystyle\frac{3}{4}$}
(4*\s/5,-8em) circle(\r) node[below,yshift=-1mm]{$\displaystyle\frac{4}{5}$}
(5*\s/6,-8em) circle(\r) node[below,yshift=-1mm]{$\displaystyle\frac{5}{6}$}
(\s,-8em) circle(\r) node[below,yshift=-1mm]{$1$};
\draw(1em,-11em) node{$n=7$};
\draw (0,-12em)--(\s/6,-12em) (\s/4,-12em)--(\s/3,-12em) (2*\s/5,-12em)--(3*\s/5,-12em)
(2*\s/3,-12em)--(3*\s/4,-12em) (5*\s/6,-12em)--(\s,-12em);
\draw[dashed] (\s/6,-12em)--(\s/4,-12em) (\s/3,-12em)--(2*\s/5,-12em)
(3*\s/5,-12em)--(2*\s/3,-12em) (3*\s/4,-12em)--(5*\s/6,-12em);
\draw[black,fill=white]
(0,-12em) circle(\r) node[below,yshift=-1mm]{$0$}
(\s/7,-12em) circle(\r) node[below,yshift=-1mm]{$\displaystyle\frac{1}{7}$}
(\s/6,-12em) circle(\r) node[below,yshift=-1mm]{$\displaystyle\frac{1}{6}$}
(\s/5,-12em) circle(\r) node[below,yshift=-1mm]{$\displaystyle\frac{1}{5}$}
(\s/4,-12em) circle(\r) node[below,yshift=-1mm]{$\displaystyle\frac{1}{4}$}
(2*\s/7,-12em) circle(\r) node[below,yshift=-1mm]{$\displaystyle\frac{2}{7}$}
(\s/3,-12em) circle(\r) node[below,yshift=-1mm]{$\displaystyle\frac{1}{3}$}
(2*\s/5,-12em) circle(\r) node[below,yshift=-1mm]{$\displaystyle\frac{2}{5}$}
(3*\s/7,-12em) circle(\r) node[below,yshift=-1mm]{$\displaystyle\frac{3}{7}$}
(\s/2,-12em) circle(\r) node[below,yshift=-1mm]{$\displaystyle\frac{1}{2}$}
(4*\s/7,-12em) circle(\r) node[below,yshift=-1mm]{$\displaystyle\frac{4}{7}$}
(3*\s/5,-12em) circle(\r) node[below,yshift=-1mm]{$\displaystyle\frac{3}{5}$}
(2*\s/3,-12em) circle(\r) node[below,yshift=-1mm]{$\displaystyle\frac{2}{3}$}
(5*\s/7,-12em) circle(\r) node[below,yshift=-1mm]{$\displaystyle\frac{5}{7}$}
(3*\s/4,-12em) circle(\r) node[below,yshift=-1mm]{$\displaystyle\frac{3}{4}$}
(4*\s/5,-12em) circle(\r) node[below,yshift=-1mm]{$\displaystyle\frac{4}{5}$}
(5*\s/6,-12em) circle(\r) node[below,yshift=-1mm]{$\displaystyle\frac{5}{6}$}
(6*\s/7,-12em) circle(\r) node[below,yshift=-1mm]{$\displaystyle\frac{6}{7}$}
(\s,-12em) circle(\r) node[below,yshift=-1mm]{$1$};
\end{tikzpicture}
\caption{The range of $\alpha g/2\pi$ for $n=2,\dots,7$. The displayed values are excluded.
Admissible values of $g$ form intervals of type (i) (solid) and type (ii) (dashed) couplings. The leftmost intervals consist of standard $g$-values \eqref{standard-g}; the rest give rise to new models (up to $g\to 2\pi/\alpha-g$, cf. Footnote \ref{footnote:1}).}
\label{Fig:gintervals}
\end{figure}

Under the $g$-constraint \eqref{standard-g}, van Diejen and Vinet \cite{vDV} quantised the compactified trigonometric RS system. They realised the quantum analogues of the reduced classical Hamiltonians $\cH_r$, $r=1,\ldots,n-1$, as pair-wise commuting discrete difference operators acting in a finite-dimensional Hilbert space consisting of complex-valued functions supported on a uniform lattice over the classical configuration space \eqref{Rconfig}. Requiring the lattice to fit precisely over the configuration space, which guaranteed well-defined and self-adjoint quantum Hamiltonians, led to the quantisation condition
$$
\frac{2\pi}{\alpha}-ng\in\mathbb{N}.
$$
Assuming this condition to be satisfied, they obtained, in particular, the explicit solution of the corresponding joint eigenvalue problem in terms of discretised $A_{n-1}$ Macdonald polynomials with unitary parameters $q=e^{\mathrm{i}\alpha}$ and $t=q^g$. In the two-particle ($n=2$) case, the pertinent eigenvalue problem was solved earlier by Ruijsenaars \cite{Rui90}. Let us also note that Feh\'{e}r and Klim\v{c}\'{i}k \cite{FK} obtained the spectra of the quantum Hamiltonians in an alternative way, namely by means of K\"{a}hler quantization.

Our motivation for this paper stems from the following observation made at the classical level: RS systems with type (i) couplings \eqref{typei} have similar features as the ones with standard couplings \eqref{standard-g}. In particular, the phase space of systems with type (i) couplings is always symplectomorphic to the $(n-1)$-dimensional complex projective space $\CP^{n-1}$, equipped with the symplectic structure $|M|\omega_{\mathrm{FS}}$, where $\omega_{\mathrm{FS}}$ stands for the Fubini-Study form and $M$ is given by
\begin{equation}\label{M}
M=\frac{2\pi}{\alpha}p-ng.
\end{equation}
For further details see \cite{FG,FKl,Rui}. Moreover, the configuration space in question can be viewed as the interior $\Sigma_{g,p}$ of the $(n-1)$-dimensional simplex
\begin{equation}\label{bSigma}
\overline{\Sigma}_{g,p}\equiv \{\vect{x}\in E_n\mid \sgn(M)(x_j-x_{j+p}-g)\geq 0,\, j=1,\dots,n\}\subset\cA_n,
\end{equation}
where
$$
x_{n+k}\equiv x_k-\frac{2\pi}{\alpha}.
$$
An illustration of the difference between the cases $M>0$ and $M<0$ is plotted in Fig. \ref{Fig:config}. As we demonstrate in Section \ref{sec:quantisation}, the products under the square roots in \eqref{Hr} are non-negative in $\overline{\Sigma}_{g,p}$ and can only vanish on the boundary, which ensures that the reduced Hamiltonians $\cH_r$ are real-valued and smooth in $\Sigma_{g,p}$.

A possible interpretation of the model is that of $n$ particles, moving on a circle of radius one half, positioned at $X_j=\frac{1}{2}e^{\mathrm{i}\alpha x_j}$, $j=1,\dots,n$, with a pair-wise interaction that depends on the square of the chord-distance. The inequalities that define $\overline{\Sigma}_{g,p}$ \eqref{bSigma} then can be thought of as lower ($M>0$) or upper ($M<0$) bounds on the $p$-nearest neighbour distances between particles. This visual interpretation gives us an intuitive understanding of the importance of $p$ being coprime to $n$. Namely, only for $p$-parameters that are coprime to $n$ do the inequalities in \eqref{bSigma} control the distance between every pair of particles. See Fig. \ref{Fig:graph}. In addition, it also becomes clear why the configuration space is invariant under the following change in parameters $(g,p)\to(2\pi/\alpha-g,n-p)$, i.e.
$$
\overline{\Sigma}_{g,p}=\overline{\Sigma}_{\frac{2\pi}{\alpha}-g,n-p},\quad 0<g<\frac{2\pi}{\alpha},\ \ p\in\{1,\ldots,n-1\}.
$$

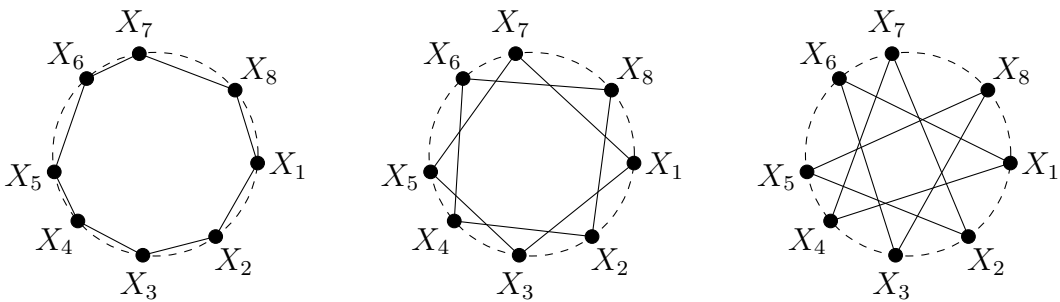
\begin{figure}[H]
\centering
\begin{tikzpicture}[scale=.9]
\def\R{1.5}
\def\r{.1}
\def\Angles{{355,306,263,221,190,132,99,39,15}}
\draw[dashed](0,0) circle(\R);
\foreach \x in {1,...,8}{
\draw[fill=black]({\R*cos(\Angles[Mod(\x-1,8)])},{\R*sin(\Angles[Mod(\x-1,8)])})circle(\r);
\draw({1.3*\R*cos(\Angles[Mod(\x-1,8)])},{1.3*\R*sin(\Angles[Mod(\x-1,8)])})node{$X_{\x}$};
\draw({\R*cos(\Angles[Mod(\x-1,8)])},{\R*sin(\Angles[Mod(\x-1,8)])})--({\R*cos(\Angles[Mod(\x,8)])},{\R*sin(\Angles[Mod(\x,8)])});}
\begin{scope}[shift={{(5.5,0)}}]
\draw[dashed](0,0) circle(\R);
\foreach \x in {1,...,8}{
\draw[fill=black]({\R*cos(\Angles[Mod(\x-1,8)])},{\R*sin(\Angles[Mod(\x-1,8)])})circle(\r);
\draw({1.3*\R*cos(\Angles[Mod(\x-1,8)])},{1.3*\R*sin(\Angles[Mod(\x-1,8)])})node{$X_{\x}$};
\draw({\R*cos(\Angles[Mod(\x-1,8)])},{\R*sin(\Angles[Mod(\x-1,8)])})--({\R*cos(\Angles[Mod(\x+1,8)])},{\R*sin(\Angles[Mod(\x+1,8)])});}
\end{scope}
\begin{scope}[shift={{(11,0)}}]
\draw[dashed](0,0) circle(\R);
\foreach \x in {1,...,8}{
\draw[fill=black]({\R*cos(\Angles[Mod(\x-1,8)])},{\R*sin(\Angles[Mod(\x-1,8)])})circle(\r);
\draw({1.3*\R*cos(\Angles[Mod(\x-1,8)])},{1.3*\R*sin(\Angles[Mod(\x-1,8)])})node{$X_{\x}$};
\draw({\R*cos(\Angles[Mod(\x-1,8)])},{\R*sin(\Angles[Mod(\x-1,8)])})--({\R*cos(\Angles[Mod(\x+2,8)])},{\R*sin(\Angles[Mod(\x+2,8)])});}
\end{scope}
\end{tikzpicture}
\caption{Chords connecting $p$-nearest neighbours for $n=8$ with $p=1,2,3$, respectively.
If the resulting graph (solid lines) is disconnected (such as the one in the middle), nearest neighbour particles can get arbitrarily close to each other regardless of lower/upper bounds on the drawn chord-distances. The number of graph components is $\gcd(n,p)$.
}
\label{Fig:graph}
\end{figure}

The state of affairs sketched above suggests that it should be possible to generalise the approach and results of van Diejen and Vinet \cite{vDV} from the standard $g$-values \eqref{standard-g} to all type (i) values \eqref{typei}. In this paper, we demonstrate that this is indeed the case. More precisely, for all type (i) values of the coupling parameter $g$ satisfying the quantisation condition $M\equiv 2\pi p/\alpha-ng\in \mathbb{Z}\setminus\{0\}$, we associate to the reduced classical Hamiltonians $\cH_r$, $r=1,\ldots,n-1$, pair-wise commuting and self-adjoint discrete difference operators acting in a finite-dimensional Hilbert space of complex-valued functions on a uniform lattice over the simplex \eqref{bSigma}. In addition, using the $A_{n-1}$ Macdonald polynomials we explicitly solve the corresponding joint eigenvalue problem.

As will become clear in the main text, many of the qualitative features of our results are encoded in the choice of $p$ and the sign of $M$ \eqref{M}, which distinguishes between the two cases in \eqref{typei}. Moreover, the presence of negative $\vect{x}$-dependent factors in \eqref{Hr} entails that some of our key (technical) results require more intricate or alternative proofs compared with the corresponding results in \cite{vDV}.

The rest of this paper is structured as follows. Section \ref{sec:quantisation} is devoted to the quantisation of the compactified trigonometric RS system for all type (i) couplings: after fixing root system notation that will be used throughout the paper, we introduce the pertinent finite-dimensional Hilbert space of lattice functions and associate self-adjoint and pair-wise commuting discrete difference operators to the reduced classical Hamiltonians. In Section \ref{sec:eigfuncs}, the corresponding joint eigenvalue problem is solved explicitly, and Section \ref{sec:discussion} contains a discussion of open problems and possible directions for future research.

\medskip

\textit{Note.} For our purposes, we find it convenient to employ the convention
$$
\N_0\equiv\{0,1,2,\dots\},\qquad\N\equiv\{1,2,\dots\}.
$$

\section{Quantisation}\label{sec:quantisation}
In this section, we quantise the family of Poisson commuting reduced classical Hamiltonians $\cH_r$, $r=1,\ldots,n-1$, (given by \eqref{cHr} below) for all type (i) values \eqref{typei} of the coupling parameter $g$. The corresponding quantum Hamiltonians will be given by commuting discrete difference operators acting in a finite-dimensional Hilbert space of lattice functions.

\subsection{Root system notation}\label{subsec:rootnot}
We begin by specifying root system notation that will be used throughout the remainder of the paper.

Given a positive integer $n$, we let $\{\vect{e}_1,\dots,\vect{e}_n\}$ denote the standard
basis in $\R^n$ and let $\langle\cdot,\cdot\rangle$ be the usual inner product on $\R^n$, so that $\langle \vect{e}_j,\vect{e}_k\rangle=\delta_{jk}$.

In this paper, we focus on the root systems
$$
A_{n-1} = \{\vect{e}_j-\vect{e}_k\mid
j,k=1,\dots,n,\ j\neq k\}\subset E_n,\quad n\geq 2,
$$
but a brief discussion of potential generalisations to other root systems can be found in Section \ref{sec:discussion}. For $p\in\{1,\dots,n-1\}$ coprime to $n$, in notation $p\perp n$, we find it convenient to make use of a specific $p$-dependent base $\{\vect{a}_{1,p},\dots,\vect{a}_{n-1,p}\}$ of $A_{n-1}$, consisting of the simple roots
\begin{equation}\label{ajp}
\vect{a}_{j,p}=\vect{e}_j-\vect{e}_{j+p},\quad j=1,\dots,n-1,
\end{equation}
where we employed the periodicity convention
$$
\vect{e}_{n+j}\equiv \vect{e}_j.
$$
To see that $\{\vect{a}_{j,p}\}$ is in fact a base of $A_{n-1}$, we recall that the Weyl group $W_R$ of a root system $R$ acts transitively on bases and $W_{A_{n-1}}\cong S_n$. Let $\sigma\in S_n$ be the permutation defined by $\sigma(j)=jp\mod n$. The corresponding element in the Weyl group $W_{A_{n-1}}$ sends the standard base vectors $\vect{a}_j=\vect{e}_j-\vect{e}_{j+1}$ to $\vect{a}_{\sigma(j),p}$, $j=1,\dots,n-1$.
We also need the associated fundamental weights $\{\vect{\omega}_{1,p},\dots,\vect{\omega}_{n-1,p}\}\subset E_n$, characterised by the property
$$
\langle\vect{a}_{j,p},\vect{\omega}_{k,p}\rangle=\delta_{jk},\quad j,k=1,\dots,n-1.
$$
Within the $A_{n-1}$ root lattice $Q$ and weight lattice $\Lambda$, we get the (integral) cones
$$
Q_p^+=\mathrm{span}_{\N_0}\{\vect{a}_{1,p},\dots,\vect{a}_{n-1,p}\}
\quad\text{and}\quad
\Lambda_p^+=\mathrm{span}_{\N_0}\{\vect{\omega}_{1,p},\dots,\vect{\omega}_{n-1,p}\}.
$$
The former cone contains the set of positive roots
$$
A_{n-1,p}^+=A_{n-1}\cap Q^+_p,
$$
which consists of roots of the form $\vect{a}_{i_1,p}+\dots+\vect{a}_{i_t,p}$ with distinct indices $i_1,\dots,i_t$. The cone $\Lambda_p^+$ (of dominant weights) is partially ordered by the dominance order, defined for $\vect{\lambda},\vect{\mu}\in\Lambda_p^+$ by
\begin{equation}\label{dominance}
\vect{\mu}\preceq\vect{\lambda}
\quad\text{iff}\quad
\vect{\lambda}-\vect{\mu}\in Q_p^+.
\end{equation}
Letting
$$
\vect{a}_{n,p}=\vect{e}_n-\vect{e}_p,
$$
we note that the corresponding maximal root $\vect{a}_{\max,p}$ in $A_{n-1}$ is given by
\begin{equation}\label{amax}
\vect{a}_{\max,p}=\vect{a}_{1,p}+\dots+\vect{a}_{n-1,p}=-\vect{a}_{n,p}.
\end{equation}

Specialising to $p=1$, we recover the standard base of $A_{n-1}$. For convenience, we then drop the subscript $p$ and simply write
\begin{equation}\label{aj}
\vect{a}_j\equiv\vect{a}_{j,1}=\vect{e}_j-\vect{e}_{j+1},\quad j=1,\dots,n-1,
\end{equation}
\begin{equation}\label{omegak}
\vect{\omega}_k\equiv \vect{\omega}_{k,1}=\vect{e}_1+\dots+\vect{e}_k-\frac{k}{n}(\vect{e}_1+\dots+\vect{e}_n),\quad k=1,\dots,n-1,
\end{equation}
and
\begin{equation}\label{QLamA}
Q^+\equiv Q^+_1,\qquad \Lambda^+\equiv\Lambda^+_1,\qquad A_{n-1}^+\equiv A_{n-1,1}^+.
\end{equation}

For general $p$-values, it will often be natural to make use of the weighted sum of fundamental weights
$$
\vect{\rho}_p=g(\vect{\omega}_{1,p}+\dots+\vect{\omega}_{n-p,p})
+\bigg(g-\frac{2\pi}{\alpha}\bigg)(\vect{\omega}_{n-p+1,p}
+\dots+\vect{\omega}_{n-1,p}),
$$
which in the $p=1$ case specialises to the standard weighted half-sum of positive roots
\begin{equation}\label{rho}
\vect{\rho}\equiv\vect{\rho}_1=g\sum_{j=1}^{n-1}\vect{\omega}_j= \frac{g}{2}\sum_{\vect{a}\in A_{n-1}^+}\vect{a}.
\end{equation}
In particular, it allows us to establish a convenient characterisation of the simplex $\overline{\Sigma}_{g,p}$ \eqref{bSigma} in terms of the $p$-dependent simple roots $\vect{a}_{j,p}$ \eqref{ajp} and the maximal root $\vect{a}_{\max,p}$ \eqref{amax}, as illustrated in Fig. \ref{Fig:config}.

\begin{lemma}\label{Lemma:simplex}
The simplex $\overline{\Sigma}_{g,p}$ consists of all points $\vect{x}\in\cA_n$ \eqref{An} satisfying the inequalities
\begin{enumerate}
\item $\sgn(M)\langle\vect{a}_{j,p},\vect{x}-\vect{\rho}_p\rangle\geq 0$ for $j=1,\dots,n-1$,
\item $\sgn(M)\langle\vect{a}_{\max,p},\vect{x}-\vect{\rho}_p\rangle\leq |M|$.
\end{enumerate}
\end{lemma}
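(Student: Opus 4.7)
The plan is to show that conditions~1 and~2 are literally the $n$ defining inequalities of $\overline{\Sigma}_{g,p}$ rewritten via the duality $\langle\vect{a}_{j,p},\vect{\omega}_{k,p}\rangle=\delta_{jk}$. First I would evaluate the ``shifts'' $\langle\vect{a}_{j,p},\vect{\rho}_p\rangle$: from the definition of $\vect{\rho}_p$ and the duality, for $j=1,\dots,n-p$ this equals $g$, while for $j=n-p+1,\dots,n-1$ it equals $g-2\pi/\alpha$. For $\vect{a}_{\max,p}=\vect{a}_{1,p}+\dots+\vect{a}_{n-1,p}$ the same duality gives $\langle\vect{a}_{\max,p},\vect{\rho}_p\rangle=g(n-p)+(g-2\pi/\alpha)(p-1)=(g-2\pi/\alpha)-M$, using the definition $M=2\pi p/\alpha-ng$.

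Next I would unpack each inequality in coordinates. For $1\le j\le n-p$ the root $\vect{a}_{j,p}=\vect{e}_j-\vect{e}_{j+p}$ has no wraparound, so $\langle\vect{a}_{j,p},\vect{x}\rangle=x_j-x_{j+p}$ and condition~1 reads $\sgn(M)(x_j-x_{j+p}-g)\ge 0$, which is precisely the $j$-th inequality in~\eqref{bSigma}. For $n-p+1\le j\le n-1$ the convention $\vect{e}_{n+k}\equiv\vect{e}_k$ gives $\vect{a}_{j,p}=\vect{e}_j-\vect{e}_{j+p-n}$, and together with the companion convention $x_{n+k}\equiv x_k-2\pi/\alpha$ one finds $\langle\vect{a}_{j,p},\vect{x}\rangle=x_j-x_{j+p}+2\pi/\alpha$. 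Subtracting the shift $g-2\pi/\alpha$ computed above again recovers $\sgn(M)(x_j-x_{j+p}-g)\ge 0$, as required.

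For condition~2, $\vect{a}_{\max,p}=\vect{e}_p-\vect{e}_n$, so $\langle\vect{a}_{\max,p},\vect{x}\rangle=x_p-x_n$. Substituting the shift $\langle\vect{a}_{\max,p},\vect{\rho}_p\rangle=(g-2\pi/\alpha)-M$ and using $|M|+\sgn(M)\cdot M=2|M|$ to collapse the $|M|$'s, condition~2 reduces to
$$
\sgn(M)(x_p-x_n)\le\sgn(M)\bigl(2\pi/\alpha-g\bigr),
$$
which, after rewriting the right-hand side as $\sgn(M)(x_{n+p}-x_n-(-g))$ via $x_{n+p}=x_p-2\pi/\alpha$, is exactly the missing $j=n$ inequality in~\eqref{bSigma}. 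Combined with the earlier matching, this identifies conditions~1 and~2 with the full defining system of $\overline{\Sigma}_{g,p}$.

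Finally I would address the hypothesis $\vect{x}\in\cA_n$: the inclusion $\overline{\Sigma}_{g,p}\subset\cA_n$ asserted after~\eqref{bSigma} shows that the set carved out of $\cA_n$ by~1 and~2 agrees with $\overline{\Sigma}_{g,p}$. I expect the only real pitfall to be the cyclic bookkeeping in the second range $j\ge n-p+1$, where the root and coordinate wraparounds must combine with the sign $\sgn(M)$ without error; once the shifts $\langle\vect{a}_{j,p},\vect{\rho}_p\rangle$ are computed cleanly, the rest is a direct coordinate match.
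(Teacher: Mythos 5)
Your strategy is exactly the paper's: evaluate $\langle\vect{a}_{j,p},\cdot\rangle$ on $\vect{x}$ and on $\vect{\rho}_p$ via the duality $\langle\vect{a}_{j,p},\vect{\omega}_{k,p}\rangle=\delta_{jk}$ and match the result with the defining inequalities of $\overline{\Sigma}_{g,p}$, and the final inequalities you state at each stage are the correct ones. However, both intermediate quantities you compute in the wraparound regime carry sign errors, so neither derivation closes as written. First, for $j=n-p+1,\dots,n-1$ one has $\vect{e}_{j+p}=\vect{e}_{j+p-n}$ and $x_{j+p-n}=x_{j+p}+2\pi/\alpha$, hence $\langle\vect{a}_{j,p},\vect{x}\rangle=x_j-x_{j+p-n}=x_j-x_{j+p}-2\pi/\alpha$, not $+2\pi/\alpha$; with your sign, subtracting the (correct) shift $g-2\pi/\alpha$ yields $x_j-x_{j+p}-g+4\pi/\alpha$ rather than the required $x_j-x_{j+p}-g$. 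Second, $\langle\vect{a}_{\max,p},\vect{\rho}_p\rangle=g(n-p)+(g-2\pi/\alpha)(p-1)=(2\pi/\alpha-g)-M$, not $(g-2\pi/\alpha)-M$; with your value, condition~2 would reduce to $\sgn(M)(x_p-x_n)\le\sgn(M)(g-2\pi/\alpha)$, i.e.\ the inequality you then write down but with the right-hand side negated. With the correct shift, the term $\sgn(M)M=|M|$ cancels the $|M|$ on the right and one lands directly on $\sgn(M)(x_p-x_n)\le\sgn(M)(2\pi/\alpha-g)$, equivalently $\sgn(M)(x_n-x_{n+p}-g)\ge 0$, the missing $j=n$ inequality, as you assert (no identity of the form $|M|+\sgn(M)M=2|M|$ is needed). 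These are precisely the cyclic-bookkeeping pitfalls you flagged yourself; once the two signs are repaired the argument coincides with the proof in the paper.
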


\begin{proof}
With the periodicity convention $x_{j+n}\equiv x_j-2\pi/\alpha$ in effect, we have
$$
\langle\vect{a}_{j,p},\vect{x}\rangle
=\begin{cases}
x_j-x_{j+p},&j=1,\dots,n-p\\[.5em]
x_j-x_{j+p}-\dfrac{2\pi}{\alpha},&j=n-p+1,\dots,n-1
\end{cases}
$$
and
$$
\langle\vect{a}_{\max,p},\vect{x}-\vect{\rho}_p\rangle
=-\langle\vect{a}_{n,p},\vect{x}\rangle-\langle\vect{a}_{\max,p},\vect{\rho}_p\rangle
=-x_n+x_{n+p}+M+g.
$$
Hence the former set of inequalities is equivalent to $\sgn(M)(x_j-x_{j+p}-g)\geq 0$, $j=1,\dots,n-1$, and the last inequality can be rewritten as $\sgn(M)(x_n-x_{n+p}-g)\geq 0$.
\end{proof}

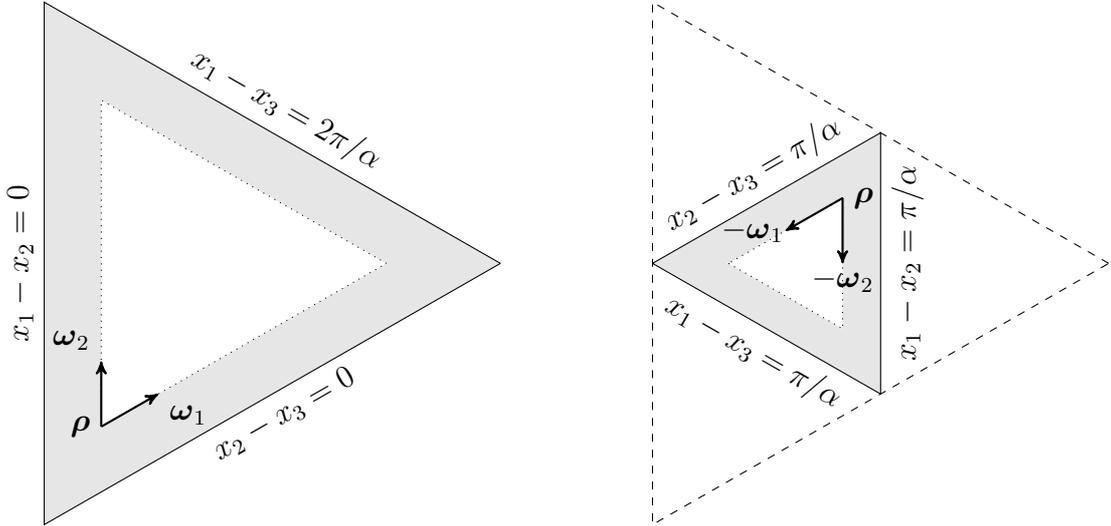
\begin{figure}[H]
\centering
\begin{tikzpicture}[black]
\def\Radius{4}
\def\radius{2.5}
\def\angle{0}
\def\Lattpts{5}
\def\lattpts{4}

\draw[fill=black!10!white]({\Radius*cos(\angle)},{\Radius*sin(\angle)})--({\Radius*cos(\angle+120)},{\Radius*sin(\angle+120)})node[midway,above,rotate=-30]{$x_1-x_3=2\pi/\alpha$}--({\Radius*cos(\angle+240)},{\Radius*sin(\angle+240)})node[midway,above,rotate=90]{$x_1-x_2=0$}--({\Radius*cos(\angle)},{\Radius*sin(\angle)})node[midway,below,rotate=30]{$x_2-x_3=0$};
\draw[dotted,fill=white]({\radius*cos(\angle)},{\radius*sin(\angle)})--({\radius*cos(\angle+120)},{\radius*sin(\angle+120)})--({\radius*cos(\angle+240)},{\radius*sin(\angle+240)})node[left]{$\vect{\rho}$}--cycle;
\draw[->,>=stealth',thick]({\radius*cos(\angle+240)},{\radius*sin(\angle+240)})
--({\radius*(cos(\angle+240)+(cos(\angle)-cos(\angle+240))/\Lattpts)},{\radius*(sin(\angle+240)+(sin(\angle)-sin(\angle+240))/\Lattpts)})node[below right]{$\vect{\omega}_1$};
\draw[->,>=stealth',thick]({\radius*cos(\angle+240)},{\radius*sin(\angle+240)})
--({\radius*(cos(\angle+240)+(cos(\angle+120)-cos(\angle+240))/\Lattpts)},{\radius*(sin(\angle+240)+(sin(\angle+120)-sin(\angle+240))/\Lattpts)})node[above left]{$\vect{\omega}_2$};

\begin{scope}[shift={{(8,0)}}]
\draw[dashed]({\Radius*cos(\angle)},{\Radius*sin(\angle)})--({\Radius*cos(\angle+120)},{\Radius*sin(\angle+120)})--({\Radius*cos(\angle+240)},{\Radius*sin(\angle+240)})--({\Radius*cos(\angle)},{\Radius*sin(\angle)});
\draw[fill=black!10!white,rotate=180,scale=.5]({\Radius*cos(\angle)},{\Radius*sin(\angle)})--({\Radius*cos(\angle+120)},{\Radius*sin(\angle+120)})node[midway,below,rotate=-30]{$x_1-x_3=\pi/\alpha$}--({\Radius*cos(\angle+240)},{\Radius*sin(\angle+240)})node[midway,below,rotate=90]{$x_1-x_2=\pi/\alpha$}--({\Radius*cos(\angle)},{\Radius*sin(\angle)})node[midway,above,rotate=30]{$x_2-x_3=\pi/\alpha$};
\draw[dotted,fill=white,rotate=180,scale=.4]({\radius*cos(\angle)},{\radius*sin(\angle)})--({\radius*cos(\angle+120)},{\radius*sin(\angle+120)})--({\radius*cos(\angle+240)},{\radius*sin(\angle+240)})node[right]{$\vect{\rho}$}--cycle;
\draw[->,>=stealth',thick,rotate=180,scale=.4]({\radius*cos(\angle+240)},{\radius*sin(\angle+240)})
--({\radius*(cos(\angle+240)+5/2*(cos(\angle)-cos(\angle+240))/\Lattpts)},{\radius*(sin(\angle+240)+5/2*(sin(\angle)-sin(\angle+240))/\Lattpts)})node[left,xshift=.3em]{$-\vect{\omega}_1$};
\draw[->,>=stealth',thick,rotate=180,scale=.4]({\radius*cos(\angle+240)},{\radius*sin(\angle+240)})
--({\radius*(cos(\angle+240)+5/2*(cos(\angle+120)-cos(\angle+240))/\Lattpts)},{\radius*(sin(\angle+240)+5/2*(sin(\angle+120)-sin(\angle+240))/\Lattpts)})node[below,yshift=.2em]{$-\vect{\omega}_2$};
\end{scope}
\end{tikzpicture}
\caption{Three-particle configuration spaces $\Sigma_{g,1}$ (white triangles with dotted sides) in the centre-of-mass plane $E_3$ with couplings yielding $M>0$ (left) and $M<0$ (right).}
\label{Fig:config}
\end{figure}

\subsection{The reduced classical Hamiltonians}\label{sec:reduced}
Next, we establish the claimed non-negativity of the products under the square roots in \eqref{Hr}, specify the value of $s(g;J)$ for $g$-values of type (i) and $J\subset\{1,\ldots,n\}$, and write down the reduced Hamiltonians $\cH_r$, $r=1,\ldots,n-1$, using root system notation.

Given any index set $J\subset\{1,\dots,n\}$, we let
\begin{equation}\label{VJ}
V_J(g;\vect{x})=\prod_{\substack{j\in J\\k\notin J}}\frac{\sin\frac{\alpha}{2}(x_j-x_k+g)}{\sin\frac{\alpha}{2}(x_j-x_k)}
\end{equation}
with $V_J\equiv 1$ if $J=\emptyset$ or $J=\{1,\dots,n\}$, and use the shorthand notation
$$
V_j\equiv V_{\{j\}},\quad j=1,\dots,n.
$$
By a direct computation, the well-known identity
$$
\prod_{\substack{j\in J\\k\notin J}}\bigg(1-\frac{\sin^2\big(\frac{\alpha g}{2}\big)}{\sin^2\frac{\alpha}{2}(x_j-x_k)}\bigg)=V_J(g;\vect{x})V_J(g;-\vect{x})
$$
is readily verified. For $p\in\{1,\dots,n-1\}$ coprime to $n$ and $g$ satisfying either of the two constraints in \eqref{typei}, we claim that $V_J(\vect{x})V_J(-\vect{x})$ is a non-negative function of $\vect{x}$ in the simplex $\overline{\Sigma}_{g,p}$ and can only vanish on the boundary. To verify this claim, we recall from \cite{FG,FKl} that
\begin{equation}\label{VnuNonneg1}
(-1)^{p-1}\sgn(M)V_j(\pm\vect{x})\geq 0,\quad \vect{x}\in\overline{\Sigma}_{g,p},
\end{equation}
with $V_j(\pm\vect{x})=0$ only along a single facet of $\overline{\Sigma}_{g,p}$, namely
\begin{equation}\label{Vjzer1}
V_j(\vect{x})=0\quad \mathrm{iff}\quad x_{j-p}-x_j=g
\end{equation}
and
\begin{equation}\label{Vjzer2}
V_j(-\vect{x})=0\quad \mathrm{iff}\quad x_j-x_{j+p}=g.
\end{equation}
We note that each $V_J$ can be expressed in terms of the $V_j$ ($j\in J$) as
\begin{equation}
V_J(\vect{x})=F_J(\vect{x})\prod_{j\in J}V_j(\vect{x}),
\label{VJ-with-Vj}
\end{equation}
where
\begin{equation}\label{FJ}
F_J(\vect{x}) = \prod_{\substack{j,k\in J\\ j\neq k}}
\frac{\sin\frac{\alpha}{2}(x_j-x_k)}{\sin\frac{\alpha}{2}(x_j-x_k+g)}
\end{equation}
cancels the extra factors from the product of the $V_j$ ($j\in J$). We clearly have
$$
F_J(-\vect{x})=F_J(\vect{x}),
$$
which together with \eqref{VJ-with-Vj} gives
$$
V_J(\vect{x})V_J(-\vect{x})=(F_J(\vect{x}))^2
\prod_{j\in J}V_j(\vect{x})V_j(-\vect{x}).
$$
Since $\vect{x}\in\overline{\Sigma}_{g,p}\subset\cA_n$ \eqref{An}, the arguments $(\alpha/2)(x_j-x_k)$ in the numerator of $F_J$ \eqref{FJ} satisfy
$$
0<\frac{\alpha}{2}|x_j-x_k|\leq\frac{\alpha}{2}(x_1-x_n)<\pi
\quad\text{for all}\quad j,k\in\{1,\dots,n\},\ j\neq k,
$$
and therefore $F_J(\vect{x})$ never vanishes for $\vect{x}\in\overline{\Sigma}_{g,p}$. Consequently, $V_J(\vect{x})$ and $V_J(-\vect{x})$ have one and the same sign within the simplex $\Sigma_{g,p}$ and only vanish on certain facets of $\overline{\Sigma}_{g,p}$, as claimed. It is now clear that the reduced Hamiltonians $\cH_r$, $r=1,\ldots,n-1$, are indeed real-valued and smooth in $\Sigma_{g,p}$.

At this point, we can define $s(g;J)$, appearing in \eqref{Hr}, as the sign of $V_J(g;\vect{x})$ (or $V_J(g;-\vect{x})$) in $\Sigma_{g,p}$, i.e.
$$
s(g;J)=\sgn(V_J(g;\vect{x})),\quad \vect{x}\in\Sigma_{g,p}.
$$
Note that \eqref{VnuNonneg1} implies that $s(\{j\})$ is independent of the value of $j\in\{1,\ldots,n\}$ and that $s(\{1,\ldots,n\})=\sgn(1)=1$.

Using the explicit formula \eqref{omegak} for the fundamental weights $\vect{\omega}_r$, it is readily verified that the reduced Hamiltonians have the following simple and uniform expression in terms of root system notation: 
\begin{equation}\label{cHr}
\cH_r(g;\vect{x},\vect{p})=\sum_{\vect{\nu}\in S_n(\vect{\omega}_r)}s(g;\vect{\nu})\cos(\langle\vect{\nu},\vect{p}\rangle)\sqrt{V_{\vect{\nu}}(g;\vect{x})V_{\vect{\nu}}(g;-\vect{x})},\quad r=1,\dots,n-1,
\end{equation}
where $S_n(\vect{\omega}_r)$ denotes the $S_n$-orbit of $\vect{\omega}_r$ under the standard action of $S_n$ on $E_n$, the coefficient functions $V_{\vect{\nu}}$ are given by 
\begin{equation}\label{Vnu}
V_{\vect{\nu}}(g;\vect{x})=
\prod_{\substack{\vect{a}\in A_{n-1}\\\langle\vect{a},\vect{\nu}\rangle=1}}
\frac{\sin\frac{\alpha}{2}(\langle\vect{a},\vect{x}\rangle+g)}{\sin\frac{\alpha}{2}\langle\vect{a},\vect{x}\rangle}
\end{equation}
and
\begin{equation}\label{snu}
s(g;\vect{\nu})=\sgn(V_{\vect{\nu}}(g;\vect{x})),\quad \vect{x}\in\Sigma_{g,p}.
\end{equation}

Finally, we note that the reduced Hamiltonians $\cH_r$, $r=1,\ldots,n-1$, are spectral invariants of the Lax matrix $L$ \cite{FG,FKl} of the compactified trigonometric RS model. Namely, one can prove that $\cH_r=((-1)^{p-1}\sgn(M))^r\mathrm{Re}(c_r)$ with $c_r$ given by $\det(L+z\mathbf{1}_n)=z^n+c_1z^{n-1}+\dots+c_{n-1}z+1$.

\subsection{The finite-dimensional Hilbert space}\label{subsec:hilbert}
We proceed to introduce the pertinent finite-dimensional Hilbert space, consisting of complex-valued functions supported on a finite uniform lattice over $\overline{\Sigma}_{g,p}$.

From the former set of inequalities in Lemma \ref{Lemma:simplex}, it is clear that $\vect{x}=\vect{\rho}_p$ is the unique `minimal' (`maximal') vertex of $\overline{\Sigma}_{g,p}$ for $M>0$ ($M<0$) in the sense that the functionals $\langle\vect{a}_{j,p},\cdot\rangle$ simultaneously assume their minimal (maximal) value at $\vect{\rho}_p$. By adding (subtracting) vectors from the cone of dominant weights $\Lambda_p^+$ to (from) $\vect{\rho}_p$, we produce the finite uniform lattice
$$
\vect{\rho}_p+\sgn(M)\Lambda^+_{p,|M|}\subset\overline{\Sigma}_{g,p}
$$
with
$$
\Lambda^+_{p,|M|}=\{\vect{\mu}\in\Lambda^+_p\mid \langle\vect{a}_{\mathrm{max},p},\vect{\mu}\rangle\leq |M|\};
$$
see Fig. \ref{Fig:lattices}. In more explicit terms, it can be described as the set of all points of the form
\begin{equation}\label{pts}
\vect{x}=\vect{\rho}_p+\sgn(M)\sum_{j=1}^{n-1}m_j\vect{\omega}_{j,p},\quad m_j\in\N_0,
\quad \sum_{j=1}^{n-1}m_j\leq |M|.
\end{equation}

By intersecting $(n-2)$ hyperplanes of the form $\langle\vect{a}_{j,p},\vect{x}-\vect{\rho}_p\rangle=0$ with the hyperplane $\langle\vect{a}_{\max,p},\vect{x}-\vect{\rho}_p\rangle=M$, we obtain the remaining vertices of $\overline{\Sigma}_{g,p}$:
$$
\vect{\rho}_p+M\vect{\omega}_{k,p},\quad k=1,\dots,n-1.
$$
These points are contained in the lattice $\vect{\rho}_p+\sgn(M)\Lambda^+_{p,|M|}$ if and only if the scale factor $\alpha$ and coupling parameter $g$ satisfy the quantisation condition
\begin{equation}\label{quant}
M\equiv \frac{2\pi}{\alpha}p-ng\in\Z\setminus\{0\}.
\end{equation}
Henceforth we shall assume that this condition is satisfied. Then we can introduce the non-negative integer $m_n\in\N_0$ by
$$
m_n=|M|-\sum_{j=1}^{n-1}m_j,
$$
and, for any lattice point $\vect{x}\in\vect{\rho}_p+\sgn(M)\Lambda_{p,|M|}^+$, we get
\begin{equation}\label{pts-2}
m_1+\dots+m_n=|M|.
\end{equation}
Note, in particular, that the boundary component $\langle\vect{a}_{\max,p},\vect{x}-\vect{\rho}_p\rangle=M$ contains all lattice points \eqref{pts} such that $m_n=0$.

We let $L^2(\vect{\rho}_p+\sgn(M)\Lambda^+_{p,|M|})$ denote the finite-dimensional Hilbert space of lattice functions $\phi\colon\vect{\rho}_p+\sgn(M)\Lambda^+_{p,|M|}\to\C$, equipped with the sesquilinear inner product
\begin{equation}\label{innerProd}
(\phi,\psi)_{p,M}=\sum_{\vect{\mu}\in\Lambda^+_{p,|M|}}\phi(\vect{\rho}_p+\sgn(M)\vect{\mu})\overline{\psi(\vect{\rho}_p+\sgn(M)\vect{\mu})}.
\end{equation}
Its dimension, which equals the cardinality of $\Lambda^+_{p,|M|}$, is given by
\begin{equation}\label{dim}
\dim\big(L^2(\vect{\rho}_p+\sgn(M)\Lambda^+_{p,|M|})\big)=\binom{n-1+|M|}{|M|}.
\end{equation}
Indeed, the points of $\Lambda^+_{p,|M|}$ can be identified with the sequences $(m_1,\dots,m_n)\in\N_0^n$ satisfying $m_1+\dots+m_n=|M|$ (c.f.~\eqref{pts-2}), and the number of such sequences is equal to the right-hand side of \eqref{dim}.

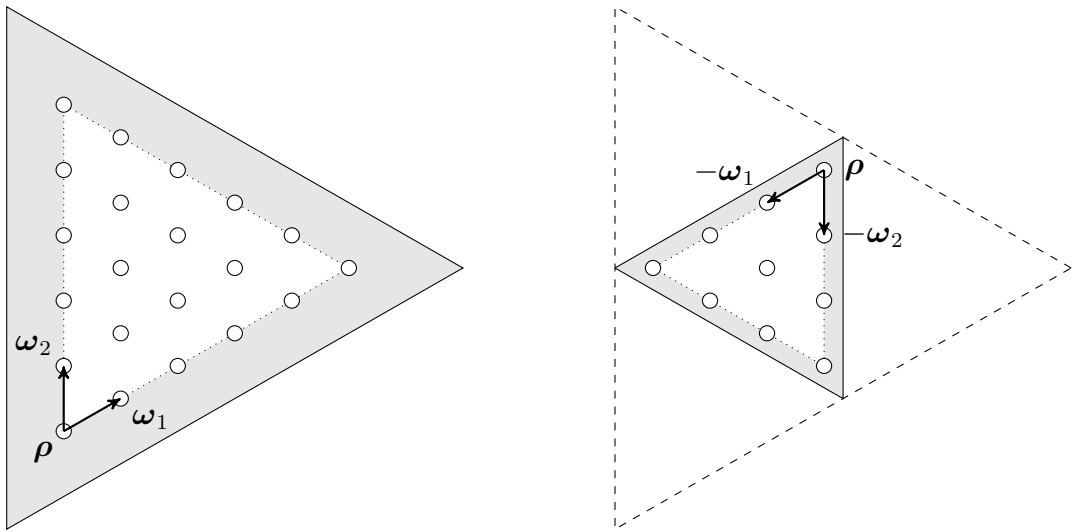
\begin{figure}[H]
\centering
\begin{tikzpicture}
\def\Radius{4}
\def\radius{2.5}
\def\cradius{.1}
\def\angle{0}
\def\Lattpts{5}
\def\lattpts{3}

\draw[fill=black!10!white]({\Radius*cos(\angle)},{\Radius*sin(\angle)})--({\Radius*cos(\angle+120)},{\Radius*sin(\angle+120)})--({\Radius*cos(\angle+240)},{\Radius*sin(\angle+240)})--cycle;
\draw[dotted,fill=white]({\radius*cos(\angle)},{\radius*sin(\angle)})--({\radius*cos(\angle+120)},{\radius*sin(\angle+120)})--({\radius*cos(\angle+240)},{\radius*sin(\angle+240)})node[below left]{$\vect{\rho}$}--cycle;
\foreach \x in {0,...,\Lattpts}{\foreach \y in {\x,...,\Lattpts}{
\draw[fill=white]
({\radius/\Lattpts*((\Lattpts-\y)*cos(\angle)+(\y-\x)*cos(\angle+120)+\x*cos(\angle+240))},
{\radius/\Lattpts*((\Lattpts-\y)*sin(\angle)+(\y-\x)*sin(\angle+120)+\x*sin(\angle+240))})circle(\cradius);}}
\draw[->,>=stealth',thick]({\radius*cos(\angle+240)},{\radius*sin(\angle+240)})
--({\radius*(cos(\angle+240)+(cos(\angle)-cos(\angle+240))/\Lattpts)},{\radius*(sin(\angle+240)+(sin(\angle)-sin(\angle+240))/\Lattpts)})node[below right]{$\vect{\omega}_1$};
\draw[->,>=stealth',thick]({\radius*cos(\angle+240)},{\radius*sin(\angle+240)})
--({\radius*(cos(\angle+240)+(cos(\angle+120)-cos(\angle+240))/\Lattpts)},{\radius*(sin(\angle+240)+(sin(\angle+120)-sin(\angle+240))/\Lattpts)})node[above left]{$\vect{\omega}_2$};

\begin{scope}[shift={{(8,0)}}]
\draw[dashed]({\Radius*cos(\angle)},{\Radius*sin(\angle)})--({\Radius*cos(\angle+120)},{\Radius*sin(\angle+120)})--({\Radius*cos(\angle+240)},{\Radius*sin(\angle+240)})--cycle;
\draw[fill=black!10!white,rotate=180,scale=.5]({\Radius*cos(\angle)},{\Radius*sin(\angle)})--({\Radius*cos(\angle+120)},{\Radius*sin(\angle+120)})--({\Radius*cos(\angle+240)},{\Radius*sin(\angle+240)})--({\Radius*cos(\angle)},{\Radius*sin(\angle)});
\draw[dotted,fill=white,rotate=180,scale=.6]({\radius*cos(\angle)},{\radius*sin(\angle)})--({\radius*cos(\angle+120)},{\radius*sin(\angle+120)})--({\radius*cos(\angle+240)},{\radius*sin(\angle+240)})node[right,xshift=.3em]{$\vect{\rho}$}--cycle;
\foreach \x in {0,...,\lattpts}{\foreach \y in {\x,...,\lattpts}{
\draw[fill=white,rotate=180,scale=.6]
({\radius/\lattpts*((\lattpts-\y)*cos(\angle)+(\y-\x)*cos(\angle+120)+\x*cos(\angle+240))},
{\radius/\lattpts*((\lattpts-\y)*sin(\angle)+(\y-\x)*sin(\angle+120)+\x*sin(\angle+240))})circle({5/3*\cradius});}}
\draw[->,>=stealth',thick,rotate=180,scale=.6]({\radius*cos(\angle+240)},{\radius*sin(\angle+240)})
--({\radius*(cos(\angle+240)+(cos(\angle)-cos(\angle+240))/\lattpts)},{\radius*(sin(\angle+240)+(sin(\angle)-sin(\angle+240))/\lattpts)})node[above left,yshift=.2em]{$-\vect{\omega}_1$};
\draw[->,>=stealth',thick,rotate=180,scale=.6]({\radius*cos(\angle+240)},{\radius*sin(\angle+240)})
--({\radius*(cos(\angle+240)+(cos(\angle+120)-cos(\angle+240))/\lattpts)},{\radius*(sin(\angle+240)+(sin(\angle+120)-sin(\angle+240))/\lattpts)})node[right,xshift=.2em]{$-\vect{\omega}_2$};
\end{scope}
\end{tikzpicture}
\caption{Three-particle lattices $\vect{\rho}+\sgn(M)\Lambda_{|M|}^+$ for $M=5$ (left) and $M=-3$ (right).}
\label{Fig:lattices}
\end{figure}

For $M>0$ an upper bound on $M$ is given by $\frac{2\pi}{\alpha q}$, corresponding to the limit $g\downarrow\frac{2\pi}{\alpha}(\frac{p}{n}-\frac{1}{nq})$. Similarly, for $M<0$ a lower bound of $-\frac{2\pi}{n-q}$ is obtained as $g\uparrow\frac{2\pi}{\alpha}(\frac{p}{n}+\frac{1}{n(n-q)})$. Therefore, to ensure that the lattice $\vect{\rho}_p+\sgn(M)\Lambda^+_{p,|M|}$ does not consist of only a single point $\vect{\rho}_p$, i.e.~that, for all type (i) couplings $g$ \eqref{typei}, we have $|M|\geq 1$, the scale parameter $\alpha>0$ has to satisfy
\begin{equation}\label{bound-on-alpha}
\frac{2\pi}{\alpha}>q,\quad\text{if}\quad M>0\qquad\text{and}\qquad
\frac{2\pi}{\alpha}>n-q,\quad \text{if}\quad M<0.
\end{equation}

\subsection{The discrete difference operators}\label{subsec:operators}
In a pioneering paper, Ruijsenaars \cite{Rui87} established a formal canonical quantisation procedure for classical relativistic Calogero-Moser systems that preserves integrability. (Here the word `formal' refers to the fact that no Hilbert space is specified.) In the present case, the procedure yields the pair-wise commuting difference operators
\begin{equation}\label{hcHr}
\hat{\cS}_r(g)\equiv \sum_{\vect{\nu}\in S_n(\vect{\omega}_r)} V_{\vect{\nu}}^{1/2}(g;\vect{x}) \exp(\langle\vect{\nu},\partial/\partial \vect{x}\rangle) V_{\vect{\nu}}^{1/2}(g;-\vect{x}),\quad r=1,\dots,n-1,
\end{equation}
with $V_{\vect{\nu}}$ given by \eqref{Vnu} and $\exp(\langle\vect{\nu},\partial/\partial \vect{x}\rangle)$, $\vect{\nu}\in\Lambda$, acting on functions $\phi\colon E_n\to\C$ according to
$$
\big(\exp(\langle\vect{\nu},\partial/\partial \vect{x}\rangle)\phi\big)(\vect{x})=\phi(\vect{x}+\vect{\nu}),
$$
and where $g$ is assumed to belong to a type (i) interval \eqref{typei} and we choose the square roots such that
\begin{equation}\label{sr}
\sgn\big(V_{\vect{\nu}}^{1/2}(g;\vect{x})V_{\vect{\nu}}^{1/2}(g;-\vect{x}-\vect{\nu})\big)=s(g;\vect{\nu}),\quad \vect{x},\vect{x}+\vect{\nu}\in\Sigma_{g,p},
\end{equation}
c.f.~\eqref{snu}. (To be precise, Ruijsenaars focused on the more general elliptic case and considered somewhat different operators $\hat{S}_r$, with $\hat{\cS}_r$ corresponding to $\hat{S}_{r}\hat{S}_n^{-r/n}$.)

Using $\vect{\omega}_{n-r}\in S_n(-\vect{\omega}_r)$ and the identity $V_{-\vect{\nu}}(\vect{x})=V_{\vect{\nu}}(-\vect{x})$, we deduce
\begin{equation}\label{hcHn-r}
\hat{\cS}_{n-r}=\sum_{\vect{\nu}\in S_n(\vect{\omega}_r)} V_{\vect{\nu}}^{1/2}(-\vect{x}) \exp(-\langle\vect{\nu},\partial/\partial \vect{x}\rangle) V_{\vect{\nu}}^{1/2}(\vect{x}),
\end{equation}
which, in particular, implies that $\hat{\cS}_{n-r}$ is the formal adjoint of $\hat{\cS}_r$. To each classical Hamiltonian $\cH_r$ it is thus natural to associate the quantum Hamiltonian $\hat{\cH}_r\equiv(\hat{\cS}_r+\hat{\cS}_{n-r})/2$. Indeed, $\hat{\cH}_r$ is clearly formally self-adjoint and \eqref{sr} ensures that $\cH_r$ is recovered by substituting $\partial/\partial \vect{x}\to\mathrm{i}\vect{p}$. For $p=1$ and $M>0$, these are precisely the quantum Hamiltonians considered by van Diejen and Vinet.

As we demonstrate below, the difference operators $\hat{\cH}_r$ are the appropriate quantum Hamiltonians whenever $g$ belongs to a type (i) parameter-interval \eqref{typei} with $M>0$, but the $g$-intervals with $M<0$ require a slight modification. More precisely, replacing $\hat{\cS}_r$ by
\begin{equation}\label{hcHrM}
\hat{\cS}_{r,M}(g)\equiv \sum_{\vect{\nu}\in S_n(\vect{\omega}_r)} V_{\vect{\nu}}^{1/2}(g;\vect{x}) \exp(\sgn(M)\langle\vect{\nu},\partial/\partial \vect{x}\rangle) V_{\vect{\nu}}^{1/2}(g;-\vect{x}),
\end{equation}
where $r=1,\dots,n-1$, we shall consider the difference operators
\begin{equation}\label{hHrM}
\hat{\cH}_{r,M}\equiv \frac{1}{2}(\hat{\cS}_{r,M}+\hat{\cS}_{n-r,M}),\quad r=1,\dots,n-1.
\end{equation}
We observe the identity $V_{\vect{\nu}}(-g;\vect{x})=V_{\vect{\nu}}(g;-\vect{x})$ and extend the definition \eqref{snu} of $s(g;\vect{\nu})$ (and hence of $\cH_r$ and $\hat{\cS}_r, \hat{\cH}_r$) to $g<0$ by
$$
s(-g;\vect{\nu})\equiv \sgn(V_{\vect{\nu}}(-g;\vect{x}))=\sgn(V_{\vect{\nu}}(g;-\vect{x}))=s(g;\vect{\nu}),
$$
whenever $g$ belongs to a type (i) interval \eqref{typei}. Then \eqref{hcHn-r} and the identity
\begin{equation}\label{VnuRefl}
V_{\vect{\nu}}(-g;-\vect{x})=V_{\vect{\nu}}(g;\vect{x})
\end{equation}
imply
\begin{equation}\label{hcHrMtohcHr}
\hat{\cS}_{r,M}(g)
=\begin{cases}
\hat{\cS}_r(g),&M>0\\
\hat{\cS}_{n-r}(-g),&M<0
\end{cases}
\end{equation}
so that $\hat{\cH}_{r,M}(g)=\hat{\cH}_r(\sgn(M)g)$. We thus retain commutativity, formal self-adjointness and the classical analogue of $\hat{\cH}_{r,M}$ again amounts to $\cH_r$, since $\cH_r(-g)=\cH_r(g)$.

We now turn to the problem of promoting the formal difference operators \eqref{hHrM} to well-defined self-adjoint operators in the Hilbert space $L^2(\vect{\rho}_p+\sgn(M)\Lambda^+_{p,|M|})$. The following technical lemma will be the key ingredient. We remark that condition 3 of the lemma is readily bypassed in the subsequent analysis and our main result does not depend on it.

\begin{lemma}\label{Lemma:Vnu}
Let $p\in\{1,\dots,n-1\}$ be coprime to $n$, and choose $\alpha,g>0$ such that
\begin{enumerate}
\item the coupling parameter $g$ belongs to one of the two type (i) parameter-intervals \eqref{typei},
\item the quantisation condition \eqref{quant} holds true.
\end{enumerate}
Then, for $\vect{\nu}\in S_n(\vect{\omega}_r)$ with $r=1,\dots,n-1$ and $\vect{\mu}\in\Lambda^+_{p,|M|}$, we have
\begin{equation}\label{VnuIneqs}
0<V_{\vect{\nu}}(\vect{\rho}_p+\sgn(M)\vect{\mu})V_{\vect{\nu}}(-\vect{\rho}_p-\sgn(M)(\vect{\mu}+\vect{\nu}))<\infty\quad\text{if}\quad\vect{\mu}+\vect{\nu}\in\Lambda^+_{p,|M|}
\end{equation}
and
\begin{equation}\label{Vzer}
V_{\vect{\nu}}(\vect{\rho}_p+\sgn(M)\vect{\mu})=0\quad\text{if}\quad \vect{\mu}+\vect{\nu}\notin\Lambda^+_{p,|M|}.
\end{equation}
If, in addition, we prescribe that
\begin{enumerate}
\item[3.] the coupling parameter $g\notin\bigg\{\dfrac{1-\sgn(M)A+B\frac{2\pi}{\alpha}}{C}\,\bigg\vert\,
\begin{smallmatrix}
A&=&0,\dots,|M|\\
B&=&0,\dots,p-1\\
C&=&1,\dots,n-1
\end{smallmatrix}\bigg\}$,
\end{enumerate}
then we also ensure that
\begin{equation}\label{Vbound}
-\infty<V_{\vect{\nu}}(-\vect{\rho}_p-\sgn(M)(\vect{\mu}+\vect{\nu}))<\infty\quad \text{if}\quad\vect{\mu}+\vect{\nu}\notin\Lambda^+_{p,|M|}.
\end{equation}
\end{lemma}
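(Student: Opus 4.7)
The plan is to reduce the claim to tracking single sine factors. First, identifying $\vect{\nu}\in S_n(\vect{\omega}_r)$ with the subset $J\subset\{1,\ldots,n\}$ of cardinality $r$ on which $\langle\vect{e}_\cdot,\vect{\nu}\rangle$ takes its larger value, we have $V_{\vect{\nu}}=V_J$ in the notation of \eqref{VJ}. The factorisation \eqref{VJ-with-Vj}, $V_J=F_J\prod_{j\in J}V_j$, combined with the observation from Subsection~\ref{sec:reduced} that $F_J$ is bounded and nowhere zero on $\overline{\Sigma}_{g,p}$, reduces the entire analysis to the single-index factors $V_j(\vect{x})$ and $V_j(-\vect{x}-\sgn(M)\vect{\nu})$, whose exact zero loci are pinned down by \eqref{Vjzer1}--\eqref{Vjzer2}.

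Second, we evaluate at the lattice point $\vect{x}=\vect{\rho}_p+\sgn(M)\vect{\mu}$ with $\vect{\mu}=\sum_{k=1}^{n-1}m_k\vect{\omega}_{k,p}$ and auxiliary integer $m_n=|M|-\sum_{k=1}^{n-1}m_k\geq 0$. A direct computation, modelled on the proof of Lemma~\ref{Lemma:simplex} and using the quantisation condition \eqref{quant} to absorb the $2\pi/\alpha$ that comes in through the cyclic index convention, shows that the $n$ quantities $\sgn(M)(x_{j-p}-x_j-g)$ run bijectively over $\{m_1,\ldots,m_n\}$ as $j$ ranges over $\{1,\ldots,n\}$ (specifically, the value $m_n$ is hit exactly when the relevant root wraps around through $\vect{a}_{\max,p}$). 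Together with \eqref{Vjzer1}--\eqref{Vjzer2}, this gives: $V_j(\vect{x})=0$ corresponds to a single equation $m_\ell=0$ for $\ell=\ell(j)$, while $V_j(-\vect{x}-\sgn(M)\vect{\nu})=0$ corresponds to the vanishing of exactly one coefficient of $\vect{\mu}+\vect{\nu}$ in the $\vect{\omega}_{k,p}$-basis extended by the formal slot $m_n-\sum_k\langle\vect{a}_{k,p},\vect{\nu}\rangle$.

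Now the condition $\vect{\mu}+\vect{\nu}\in\Lambda^+_{p,|M|}$ is equivalent to the system of $n$ inequalities $m_k+\langle\vect{a}_{k,p},\vect{\nu}\rangle\geq 0$ (for $k=1,\ldots,n-1$) together with $m_n-\sum_{k=1}^{n-1}\langle\vect{a}_{k,p},\vect{\nu}\rangle\geq 0$, and by the previous paragraph the saturation of any one of them translates into the vanishing of a single factor in $V_{\vect{\nu}}(\vect{x})$ or $V_{\vect{\nu}}(-\vect{x}-\sgn(M)\vect{\nu})$. This yields \eqref{Vzer}. In the complementary strict case none of these factors vanishes, while the denominators stay bounded thanks to the $\cA_n$-constraint inherited by both $\vect{x}$ and $-\vect{\rho}_p-\sgn(M)(\vect{\mu}+\vect{\nu})$; hence the product in \eqref{VnuIneqs} is finite and non-zero, with positivity following from $s(g;\vect{\nu})^2=1$ and the sign convention \eqref{sr}.

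The hard part is \eqref{Vbound}, because when $\vect{\mu}+\vect{\nu}\notin\Lambda^+_{p,|M|}$ the point $-\vect{\rho}_p-\sgn(M)(\vect{\mu}+\vect{\nu})$ may leave $\overline{\Sigma}_{g,p}$ and a denominator of $V_{\vect{\nu}}$ could conceivably blow up. The strategy is to expand each root $\vect{a}=\vect{e}_i-\vect{e}_k$ with $\langle\vect{a},\vect{\nu}\rangle=1$ in the simple-root basis $\{\vect{a}_{\ell,p}\}$ — equivalently, to trace the shortest cyclic path from $i$ to $k$ along the $p$-graph — and rewrite the corresponding denominator argument in the form $\frac{\alpha}{2}(-\sgn(M)A+B\frac{2\pi}{\alpha}-Cg)$, where $C\in\{1,\ldots,n-1\}$ is the length of the path, $B\in\{0,\ldots,p-1\}$ counts wrap-arounds, and $A\in\{0,\ldots,|M|\}$ is the pertinent partial sum of the $m_k+\langle\vect{a}_{k,p},\vect{\nu}\rangle$'s. (The ranges of $A,B,C$ arise from $\vect{\mu}+\vect{\nu}$ being close to but outside $\Lambda^+_{p,|M|}$, together with the quantisation condition \eqref{quant}.) Vanishing of such a sine then requires $g=(1-\sgn(M)A+B\frac{2\pi}{\alpha})/C$, which is precisely the set of exceptional values excluded by condition~3.
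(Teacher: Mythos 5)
Your overall strategy is the same as the paper's: reduce to the single-index factors $V_j$ via \eqref{VJ-with-Vj}, translate the vanishing loci \eqref{Vjzer1}--\eqref{Vjzer2} into the lattice coordinates $m_1,\dots,m_n$, and, for \eqref{Vbound}, expand roots along the $p$-graph to locate the exceptional $g$-values. However, the step on which you base the first two claims rests on a false premise: $F_J$ is \emph{not} bounded on $\overline{\Sigma}_{g,p}$. By \eqref{FJ} its denominator contains $\sin\frac{\alpha}{2}(x_j-x_{j-p}+g)$ whenever both $j$ and $j-p$ lie in $J$, and this vanishes on the facet $x_{j-p}-x_j=g$, i.e.\ at lattice points with the corresponding $m$-coordinate equal to zero (already for $n=3$, $J=\{1,2\}$, $p=1$ this happens at $m_1=0$). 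These poles of $F_J$ cancel zeros of $\prod_{j\in J}V_j$, so a zero of some $V_j$ does \emph{not} imply a zero of $V_{\vect{\nu}}$; one must verify that the zero forced by $\vect{\mu}+\vect{\nu}\notin\Lambda^+_{p,|M|}$ (namely $m_k=0$ with $\langle\vect{a}_{k,p},\vect{\nu}\rangle=-1$, hence $k\notin J$, $k+p\in J$) comes from a factor of $V_{k+p}$ indexed by $k\notin J$ and therefore survives the cancellation. This is precisely why the paper phrases the vanishing criteria in terms of $\langle\vect{a}_{j,p},\vect{\nu}\rangle=\pm 1$. Relatedly, your third paragraph argues from \emph{saturation} of the inequalities $m_k+\langle\vect{a}_{k,p},\vect{\nu}\rangle\geq 0$, whereas \eqref{Vzer} concerns their \emph{violation} (the value $-1$) and requires the zero to sit specifically in $V_{\vect{\nu}}(\vect{x})$, not merely in ``$V_{\vect{\nu}}(\vect{x})$ or $V_{\vect{\nu}}(-\vect{x}-\sgn(M)\vect{\nu})$''; as written, \eqref{Vzer} is not established. (Finiteness is fine if argued directly from \eqref{Vnu}, using $0<|\langle\vect{a},\vect{x}\rangle|<2\pi/\alpha$ on $\overline{\Sigma}_{g,p}$, which is what the paper does in \eqref{VnuFin}.)

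The treatment of \eqref{Vbound} is also internally inconsistent. If the denominator argument is $\frac{\alpha}{2}\bigl(-\sgn(M)A+B\frac{2\pi}{\alpha}-Cg\bigr)$ with $A$ equal to the partial sum of the $m_k+\langle\vect{a}_{k,p},\vect{\nu}\rangle$, then its vanishing yields $g=\bigl(-\sgn(M)A+B'\frac{2\pi}{\alpha}\bigr)/C$, not $g=\bigl(1-\sgn(M)A+B\frac{2\pi}{\alpha}\bigr)/C$: the ``$1$'' of Condition~3 only appears if the $\vect{\nu}$-shift is kept separate from the $m$-sum, as in the paper's conditions $\langle\vect{a},\vect{x}\rangle=1$ and $\langle\vect{a},\vect{x}\rangle=\frac{2\pi}{\alpha}-1$ of \eqref{denom-vanish}; moreover your $A$ then ranges over a shifted interval rather than $\{0,\dots,|M|\}$. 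You also do not treat the second vanishing mode, where the sine argument hits $2\pi/\alpha$ rather than $0$; handling it requires substituting $M=\frac{2\pi}{\alpha}p-ng$ (leading to \eqref{solve-for-g-2}), and it is this step, together with the bound $|K|<p$ on the number of wrap-arounds, that confines $B$ to $\{0,\dots,p-1\}$. As it stands, the final paragraph asserts rather than derives that the bad $g$-values lie in the excluded set \eqref{excluded}.
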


\begin{proof}
To $\vect{\nu}\in S_n(\vect{\omega}_r)$ we associate a unique index set
$$
J\subset\{1,\dots,n\},\quad |J|=r,
$$
by requiring
$$
\vect{\nu}=\sum_{j\in J}\vect{e}_j-\frac{r}{n}(\vect{e}_1+\cdots+\vect{e}_n).
$$
Recalling the function $F_J$ \eqref{FJ}, we observe that
\begin{equation}\label{VnuRep}
V_{\vect{\nu}}(\vect{x})=F_J(\vect{x})\prod_{j\in J}V_j(\vect{x}).
\end{equation}
Note that $F_J(\vect{x})$ cancels the unwanted factors $\sin(\alpha(x_j-x_k+g)/2)/\sin(\alpha(x_j-x_k)/2)$ with $k\in J$ in $V_j(\vect{x})$. Following the proof of Lemma \ref{Lemma:simplex}, we rewrite the vanishing conditions \eqref{Vjzer1} and \eqref{Vjzer2} in terms of the simple roots \eqref{ajp} and the maximal root \eqref{amax}:
\begin{equation}\label{Vjzer3}
V_j(\vect{x})=0\quad \mathrm{iff}\quad
\begin{cases}\langle\vect{a}_{j-p,p},\vect{x}-\vect{\rho}_p\rangle=0,&j\neq p,\\ \langle\vect{a}_{\mathrm{max},p},\vect{x}-\vect{\rho}_p\rangle=M,&j=p,
\end{cases}
\end{equation}
and
\begin{equation}\label{Vjzer4}
V_j(-\vect{x})=0\quad \mathrm{iff}\quad
\begin{cases}\langle\vect{a}_{j,p},\vect{x}-\vect{\rho}_p\rangle=0,&j\neq n,\\ \langle\vect{a}_{\mathrm{max},p},\vect{x}-\vect{\rho}_p\rangle=M,& j=n,
\end{cases}
\end{equation}
where $\vect{a}_{j-p,p}\equiv \vect{a}_{n-p+j,p}$ for $j=1,\dots,p-1$.

From Eqs. \eqref{VnuRep}--\eqref{Vjzer3}, we infer that $V_{\vect{\nu}}(\vect{\rho}_p+\sgn(M)\vect{\mu})$ vanishes precisely when either $\langle\vect{a}_{\mathrm{max},p},\vect{\nu}\rangle=1$ and $\langle\vect{a}_{\mathrm{max},p},\vect{\mu}\rangle=|M|$ or $\langle\vect{a}_{j,p},\vect{\nu}\rangle=-1$ and $\langle\vect{a}_{j,p},\vect{\mu}\rangle=0$ for some $j=1,\dots,n-1$. By Lemma \ref{Lemma:simplex}, this is equivalent to $\vect{\mu}\in\Lambda^+_{p,|M|}$ and $\vect{\mu}+\vect{\nu}\notin\Lambda^+_{p,|M|}$. Similarly, from Eqs. \eqref{VnuRep},\eqref{Vjzer4} we find that $V_{\vect{\nu}}(-\vect{\rho}_p-\sgn(M)(\vect{\mu}+\vect{\nu}))$ vanishes only when either $\langle\vect{a}_{\mathrm{max},p},\vect{\nu}\rangle=-1$ and $\langle\vect{a}_{\mathrm{max},p},\vect{\mu}+\vect{\nu}\rangle=|M|$ or $\langle\vect{a}_{j,p},\vect{\nu}\rangle=1$ and $\langle\vect{a}_{j,p},\vect{\mu}+\vect{\nu}\rangle=0$ with $j=1,\dots,n-1$, neither of which occur for $\vect{\mu},\vect{\mu}+\vect{\nu}\in\Lambda^+_{p,|M|}$.

Hence the statement will follow once we can show that the product in \eqref{VnuIneqs} is finite and non-negative. Due to $\overline{\Sigma}_{g,p}\subset\cA_n$, we have $0<|\langle\vect{a},\vect{x}\rangle|<2\pi/\alpha$ for all $\vect{a}\in A_{n-1}$ and $\vect{x}\in\overline{\Sigma}_{g,p}$, which entails
\begin{equation}\label{VnuFin}
|V_{\vect{\nu}}(\pm\vect{x})|<\infty,\quad \vect{x}\in\overline{\Sigma}_{g,p}.
\end{equation}
By combining \eqref{VnuRep} with the invariance properties
$$
F_J(-\vect{x})=F_J(\vect{x}),\qquad F_J(\vect{x}+\vect{\nu})=F_J(\vect{x}),
$$
we deduce that
\begin{equation}\label{VnuProd}
V_{\vect{\nu}}(\vect{x})V_{\vect{\nu}}(-\vect{x}\pm\vect{\nu})=(F_J(\vect{x}))^2\prod_{j\in J}V_j(\vect{x})V_j(-\vect{x}\pm\vect{\nu}).
\end{equation}
By assumption $\vect{\mu},\vect{\mu}+\vect{\nu}\in\Lambda^+_{p,|M|}$, so finiteness is clear from \eqref{VnuFin}, whereas non-negativity is a simple consequence of \eqref{VnuProd} and \eqref{VnuNonneg1}.

To ensure that the coefficients $V_{\vect{\nu}}(-\vect{x}-\sgn(M)\vect{\nu})$ are finite when $\vect{x}\in\vect{\rho}_p+\sgn(M)\Lambda^+_{p,|M|}$ is such a boundary lattice point that $\vect{x}+\sgn(M)\vect{\nu}\notin\vect{\rho}_p+\sgn(M)\Lambda^+_{p,|M|}$,
certain values of $g$ must be excluded. We find these values by splitting up the factors in $V_{\vect{\nu}}(-\vect{x}-\sgn(M)\vect{\nu})$ into two groups corresponding to positive and negative roots in $A_{n-1}$. This yields
\begin{multline*}
V_{\vect{\nu}}(-\vect{x}-\sgn(M)\vect{\nu})\\
=\prod_{\substack{\vect{a}\in A_{n-1}^+\\\langle\vect{a},\vect{\nu}\rangle=1}}
\frac{\sin\frac{\alpha}{2}(\langle\vect{a},-\vect{x}-\sgn(M)\vect{\nu}\rangle+g)}{\sin\frac{\alpha}{2}\langle\vect{a},-\vect{x}-\sgn(M)\vect{\nu}\rangle}
\prod_{\substack{\vect{a}\in A_{n-1}^+\\\langle\vect{a},\vect{\nu}\rangle=-1}}
\frac{\sin\frac{\alpha}{2}(\langle\vect{a},-\vect{x}-\sgn(M)\vect{\nu}\rangle-g)}{\sin\frac{\alpha}{2}\langle\vect{a},-\vect{x}-\sgn(M)\vect{\nu}\rangle}\\
=\prod_{\substack{\vect{a}\in A_{n-1}^+\\\langle\vect{a},\vect{\nu}\rangle=\sgn(M)}}
\frac{\sin\frac{\alpha}{2}(\langle\vect{a},\vect{x}\rangle+1-\sgn(M)g)}{\sin\frac{\alpha}{2}(\langle\vect{a},\vect{x}\rangle+1)}
\prod_{\substack{\vect{a}\in A_{n-1}^+\\\langle\vect{a},\vect{\nu}\rangle=-\sgn(M)}}
\frac{\sin\frac{\alpha}{2}(\langle\vect{a},\vect{x}\rangle-1+\sgn(M)g)}{\sin\frac{\alpha}{2}(\langle\vect{a},\vect{x}\rangle-1)}.
\end{multline*}
Since $2\pi/\alpha>1$ (cf. \eqref{bound-on-alpha}) and $0<\langle\vect{a},\vect{x}\rangle<2\pi/\alpha$ for any $\vect{a}\in A_{n-1}^+$ and $\vect{x}\in\overline{\Sigma}_{g,p}$, the denominators can vanish only if
\begin{equation}\label{denom-vanish}
\langle\vect{a},\vect{x}\rangle-1=0
\quad\text{or}\quad
\langle\vect{a},\vect{x}\rangle+1=\frac{2\pi}{\alpha}.
\end{equation}
Let us fix a positive root $\vect{a}=\vect{e}_j-\vect{e}_k\in A_{n-1}^+$, where $1\leq j<k\leq n$. Given that $p$ is coprime to $n$, there exists a number $\ell_{j,k}\in\{1,\dots,n-1\}$ such that
$$
j+\ell_{j,k}p\equiv k\pmod{n}.
$$
Consequently, using the periodicity convention $\vect{e}_{n+j}\equiv \vect{e}_j$, we have
\begin{equation}\label{posroot-with-p-simple}
\vect{a}=(\vect{e}_j-\vect{e}_{j+p})+(\vect{e}_{j+p}-\vect{e}_{j+2p})+\dots+(\vect{e}_{j+(\ell_{j,k}-1)p}-\vect{e}_{j+\ell_{j,k}p})=\sum_{i\in I}\vect{a}_{i,p}
\end{equation}
with index set
$$
I\equiv \{j+(\ell-1)p\mod{n}\mid\ell=1,\dots,\ell_{j,k}\}\subsetneq\{1,\dots,n\}.
$$
Introducing the additional index set
$$
K\equiv \{i\in I\mid n-p<i\leq n\},
$$
and keeping in mind that each lattice point $\vect{x}\in\vect{\rho}_p+\sgn(M)\Lambda_{p,|M|}^+$ is of the form \eqref{pts}, lets us express the first equation in \eqref{denom-vanish} as
$$
|I|g-|K|\frac{2\pi}{\alpha}+\sgn(M)\sum_{i\in I}m_i-1=0.
$$
By solving this equation for $g$, we get
\begin{equation}\label{solve-for-g-1}
g=\frac{1-\sgn(M)\sum_{i\in I}m_i+|K|\frac{2\pi}{\alpha}}{|I|}.
\end{equation}
Letting $I'=\{1,\dots,n\}\setminus I$, we have
$$
\sum_{i\in I}m_i+\sum_{i'\in I'}m_{i'}=m_1+\dots+m_n=|M|,
$$
and therefore we can recast the second equation in \eqref{denom-vanish} as
$$
|I|g-|K|\frac{2\pi}{\alpha}+M-\sgn(M)\sum_{i'\in I'}m_{i'}+1=\frac{2\pi}{\alpha}.
$$
Plugging $M=2\pi p/\alpha-ng$ into this equation and solving for $g$ yields
\begin{equation}\label{solve-for-g-2}
g=\frac{1-\sgn(M)\sum_{i'\in I'}m_{i'}+(p-1-|K|)\frac{2\pi}{\alpha}}{n-|I|}.
\end{equation}
We note that the cardinality $|K|$ of $K$ counts the number of times $j+\ell p$ `steps over' a multiple of $n$ as $\ell$ is increased from $1$ to $\ell_{j,k}$, which entails
$$
|K|=\frac{j-k+\ell_{j,k}p}{n}\leq \frac{j-(j+1)+(n-1)p}{n}<p.
$$
Thus, we have $1\leq |I|<n$, $0\leq|K|<p$, $0\leq\sum_{i\in I}m_i\leq|M|$ in \eqref{solve-for-g-1} and \eqref{solve-for-g-2}, which implies that \eqref{denom-vanish} can be satisfied only if
\begin{equation}\label{excluded}
g\in\bigg\{\frac{1-\sgn(M)A+B\frac{2\pi}{\alpha}}{C}\,\bigg\vert\,
\begin{smallmatrix}
A&=&0,\dots,|M|\\
B&=&0,\dots,p-1\\
C&=&1,\dots,n-1
\end{smallmatrix}\bigg\}.
\end{equation}
Therefore, imposing Condition 3 ensures that the $V_{\vect{\nu}}(-\vect{\rho}_p-\sgn(M)(\vect{\mu}+\vect{\nu}))$ are finite if $\vect{\mu}+\vect{\nu}\notin\Lambda^+_{p,|M|}$. This concludes the proof.
\end{proof}

It is worth mentioning that in the standard case $0<g<2\pi/\alpha n$, when $p=1$ and $M>0$, the excluded values of $g$ are obtained by setting $A=0$ and $B=0$ in \eqref{excluded}: $g\in\{1/C\mid C=1,\dots,n-1\}$.

Assuming that $p\perp n$ and $\alpha,g>0$ satisfy Conditions 1--3 of Lemma \ref{Lemma:Vnu} ensures that the coefficient functions
\begin{equation}\label{W}
W_{\vect{\nu}}(\vect{x})\equiv V^{1/2}_{\vect{\nu}}(\vect{x})V^{1/2}_{\vect{\nu}}(-\vect{x}-\sgn(M)\vect{\nu}),
\end{equation}
cf.~\eqref{hcHrM}, take non-zero real values at all lattice points $\vect{x}=\vect{\rho}_p+\sgn(M)\vect{\mu}$ with $\vect{\mu}\in\Lambda^+_{p,|M|}$ and $\vect{\mu}+\vect{\nu}\in\Lambda^+_{p,|M|}$, and vanish whenever $\vect{\mu}+\vect{\nu}\notin\Lambda^+_{p,|M|}$. By \eqref{sr} and \eqref{snu}, we have
$$
\sgn\big(W_{\vect{\nu}}(\vect{\rho}_p+\sgn(M)\vect{\mu})\big)=\sgn\big(V_{\vect{\nu}}(\vect{\rho}_p+\sgn(M)\vect{\mu})\big).
$$
In the special case $p=1$, $M>0$, van Diejen and Vinet \cite[Lemma 3.2]{vDV} proved that $V_{\vect{\nu}}(\vect{x})$ is always positive in the simplex $\Sigma_{g,p}$ independent of $\vect{\nu}$, but this is not true in general. Instead, each $V_{\vect{\nu}}(\vect{x})$ has a definite sign in the simplex $\Sigma_{g,p}$ that depends on the choice of $\vect{\nu}$. For example, consider the case $n=4$, $p=1$ and $M<0$ with $\vect{\nu}=\vect{\omega}_2$ and $\vect{\nu}'=\vect{\omega}_1-\vect{\omega}_2+\vect{\omega}_3\in S_4(\vect{\omega}_2)$. A direct computation, involving the defining inequalities of the simplex, shows that
$$
V_{\vect{\nu}}(\vect{x})<0<V_{\vect{\nu}'}(\vect{x}),\quad \forall\,\vect{x}\in\Sigma_{g,1}.
$$

Fixing a lattice function $\phi\in L^2(\vect{\rho}_p+\sgn(M)\Lambda^+_{p,|M|})$, let us extend the domain of $\phi$ by specifying the values
$$
\phi(\vect{\rho}_p+\sgn(M)(\vect{\mu}+\vect{\nu}))\in\C,\qquad \vect{\mu}\in \Lambda^+_{p,|M|},\ \ \vect{\mu}+\vect{\nu}\notin \Lambda^+_{p,|M|},\ \ \vect{\nu}\in S_n(\vect{\omega}_r),
$$
where $r=1,\dots,n-1$. Independent of the specific values we assign, \eqref{Vzer}--\eqref{Vbound} entail that
\begin{equation}\label{hcHrMAct}
\big(\hat\cS_{r,M} \phi\big)(\vect{\rho}_p+\sgn(M)\vect{\mu})=\sum_{\substack{\vect{\nu}\in S_n(\vect{\omega}_r)\\ \vect{\mu}+\vect{\nu}\in\Lambda^+_{p,|M|}}}W_{\vect{\nu}}(\vect{\rho}_p+\sgn(M)\vect{\mu})\phi(\vect{\rho}_p+\sgn(M)(\vect{\mu}+\vect{\nu})).
\end{equation}
We stress that the right-hand side depends only on the values of $\phi$ on the lattice $\vect{\rho}_p+\sgn(M)\Lambda^+_{p,|M|}$. In this sense, each difference operator $\hat\cS_{r,M}$ admits a well-defined restriction onto the Hilbert space $L^2(\vect{\rho}_p+\sgn(M)\Lambda^+_{p,|M|})$, whose action on a lattice function $\phi$ is given explicitly by \eqref{hcHrMAct}. If $g$ belongs to the set of excluded values \eqref{excluded}, $V_{\vect{\nu}}(-\vect{\rho}_p-\sgn(M)(\vect{\mu}+\vect{\nu}))$ may be singular at a boundary point $\vect{\mu}\in\Lambda^+_{p,|M|}$ with $\vect{\mu}+\vect{\nu}\notin\Lambda^+_{p,|M|}$, thus rendering the coefficient function $W_{\vect{\nu}}(\vect{\rho}_p+\sgn(M)\vect{\mu})$ ill-defined. We shall resolve this issue by insisting on continuity in $g$, so that \eqref{hcHrMAct} holds true regardless of whether or not Condition 3 is satisfied.

As the following proposition demonstrates, the corresponding restriction of each difference operator $\hat{\cH}_{r,M}$ \eqref{hHrM} yields a self-adjoint operator in $L^2(\vect{\rho}_p+\sgn(M)\Lambda^+_{p,|M|})$.

\begin{proposition}\label{Prop:adj}
Assume that $p\perp n$ and Conditions 1--2 in Lemma \ref{Lemma:Vnu} on the parameters $\alpha,g>0$ hold true. Let \eqref{hcHrMAct} define the action of the operators $\hat{\cS}_{r,M}\colon L^2(\vect{\rho}_p+\sgn(M)\Lambda^+_{p,|M|})\to L^2(\vect{\rho}_p+\sgn(M)\Lambda^+_{p,|M|})$. Then we have
\begin{equation}\label{duality}
\big(\hat{\cS}_{r,M}\phi,\psi\big)_{p,M}=\big(\phi,\hat{\cS}_{n-r,M}\psi\big)_{p,M}
\end{equation}
for all $r=1,\dots,n-1$ and $\phi,\psi\in L^2(\vect{\rho}_p+\sgn(M)\Lambda^+_{p,|M|})$. Consequently, the operators $\hat{\cH}_{r,M}\equiv\frac{1}{2}(\hat{\cS}_{r,M}+\hat{\cS}_{n-r,M})$, $r=1,\dots,n-1$, are self-adjoint in $L^2(\vect{\rho}_p+\sgn(M)\Lambda^+_{p,|M|})$.
\end{proposition}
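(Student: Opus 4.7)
The plan is to verify the adjoint identity \eqref{duality} directly from the explicit action \eqref{hcHrMAct} and then deduce self-adjointness of $\hat{\cH}_{r,M}=\tfrac{1}{2}(\hat{\cS}_{r,M}+\hat{\cS}_{n-r,M})$ as an immediate consequence. Writing out the definition of the inner product \eqref{innerProd}, we have
$$
(\hat{\cS}_{r,M}\phi,\psi)_{p,M}=\sum_{\vect{\mu}\in\Lambda^+_{p,|M|}}\sum_{\substack{\vect{\nu}\in S_n(\vect{\omega}_r)\\ \vect{\mu}+\vect{\nu}\in\Lambda^+_{p,|M|}}}W_{\vect{\nu}}(\vect{\rho}_p+\sgn(M)\vect{\mu})\,\phi(\vect{\rho}_p+\sgn(M)(\vect{\mu}+\vect{\nu}))\,\overline{\psi(\vect{\rho}_p+\sgn(M)\vect{\mu})},
$$
and the substitution $\vect{\mu}\mapsto\vect{\mu}'=\vect{\mu}+\vect{\nu}$, $\vect{\nu}\mapsto\vect{\nu}'=-\vect{\nu}$ should turn this into the double sum expressing $(\phi,\hat{\cS}_{n-r,M}\psi)_{p,M}$, provided that (a) $-S_n(\vect{\omega}_r)=S_n(\vect{\omega}_{n-r})$; (b) the coefficient transforms correctly; and (c) $W_{\vect{\nu}}(\vect{x})$ is real at the relevant lattice points, so that complex conjugation acts trivially on it.

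Item (a) is immediate, since the longest element $w_0\in S_n$ (the reversing permutation) satisfies $w_0\vect{\omega}_r=-\vect{\omega}_{n-r}$. For (b), substituting $\vect{a}\mapsto-\vect{a}$ in the product in \eqref{Vnu} yields the symmetry $V_{-\vect{\nu}}(\vect{x})=V_{\vect{\nu}}(-\vect{x})$, and inserting this into \eqref{W} gives the key identity
$$
W_{-\vect{\nu}}(\vect{x}+\sgn(M)\vect{\nu})=V_{\vect{\nu}}^{1/2}(-\vect{x}-\sgn(M)\vect{\nu})\,V_{\vect{\nu}}^{1/2}(\vect{x})=W_{\vect{\nu}}(\vect{x}),
$$
where the branch of the square root stays consistent thanks to the sign convention \eqref{sr} together with $s(g;-\vect{\nu})=s(g;\vect{\nu})$. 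Evaluating at $\vect{x}=\vect{\rho}_p+\sgn(M)\vect{\mu}'$ produces exactly $W_{-\vect{\nu}'}(\vect{\rho}_p+\sgn(M)(\vect{\mu}'+\vect{\nu}'))=W_{\vect{\nu}'}(\vect{\rho}_p+\sgn(M)\vect{\mu}')$, which is the coefficient demanded on the right-hand side. For (c), Lemma~\ref{Lemma:Vnu} guarantees $V_{\vect{\nu}}(\vect{x})V_{\vect{\nu}}(-\vect{x}-\sgn(M)\vect{\nu})\geq 0$ whenever both lattice-containment conditions hold, so real square roots can be chosen and $W_{\vect{\nu}}\in\R$. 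With (a)--(c) in hand, \eqref{duality} follows, and self-adjointness of each $\hat{\cH}_{r,M}$ is then obtained by taking adjoints of the symmetric combination $\tfrac{1}{2}(\hat{\cS}_{r,M}+\hat{\cS}_{n-r,M})$.

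The main technical point to be careful about is keeping the boundary of $\Lambda^+_{p,|M|}$ under control during the reindexing: by \eqref{Vzer}, $W_{\vect{\nu}}(\vect{\rho}_p+\sgn(M)\vect{\mu})$ vanishes as soon as $\vect{\mu}+\vect{\nu}\notin\Lambda^+_{p,|M|}$, but only when the companion factor $V_{\vect{\nu}}(-\vect{x}-\sgn(M)\vect{\nu})$ remains finite there, which is precisely the content of Condition~3 in Lemma~\ref{Lemma:Vnu}. For the finitely many exceptional values in \eqref{excluded}, one invokes the continuity-in-$g$ prescription already used to define \eqref{hcHrMAct} throughout each type~(i) interval: both sides of \eqref{duality} depend continuously on $g$ and coincide on the dense complement of the excluded set, hence they coincide everywhere, completing the proof.
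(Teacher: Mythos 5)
Your proposal is correct and follows essentially the same route as the paper's own proof: reindex the double sum via $\vect{\mu}\mapsto\vect{\mu}+\vect{\nu}$, use $\vect{\omega}_{n-r}\in S_n(-\vect{\omega}_r)$ together with the identity $W_{\vect{\nu}}(\vect{x})=W_{-\vect{\nu}}(\vect{x}+\sgn(M)\vect{\nu})$ (coming from $V_{-\vect{\nu}}(\vect{x})=V_{\vect{\nu}}(-\vect{x})$), and use the realness of $W_{\vect{\nu}}$ at lattice points to move it inside the conjugation. Your extra remarks on the square-root branch, the vanishing of boundary coefficients, and continuity in $g$ are consistent with what the paper establishes just before the proposition.
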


\begin{proof}
In order to shorten the formulae below, we first consider the case $M>0$. By \eqref{innerProd} and \eqref{hcHrMAct}, we then have
\begin{multline*}
\big(\hat{\cS}_{r,M}\phi,\psi\big)_{p,M} = \sum_{\vect{\mu}\in\Lambda^+_{p,|M|}}\big(\hat{\cS}_{r,M}\phi\big)(\vect{\rho}_p+\vect{\mu})\overline{\psi(\vect{\rho}_p+\vect{\mu})}\\
= \sum_{\vect{\nu}\in S_n(\vect{\omega}_r)}\sum_{\substack{\vect{\mu}\in\Lambda^+_{p,|M|}\\ \vect{\mu}+\vect{\nu}\in\Lambda^+_{p,|M|}}}W_{\vect{\nu}}(\vect{\rho}_p+\vect{\mu})\phi(\vect{\rho}_p+\vect{\mu}+\vect{\nu})\overline{\psi(\vect{\rho}_p+\vect{\mu})}\\
= \sum_{\vect{\nu}\in S_n(\vect{\omega}_r)}\sum_{\substack{\vect{\mu}^\prime\in\Lambda^+_{p,|M|}\\ \vect{\mu}^\prime-\vect{\nu}\in\Lambda^+_{p,|M|}}}W_{\vect{\nu}}(\vect{\rho}_p+\vect{\mu}^\prime-\vect{\nu})\phi(\vect{\rho}_p+\vect{\mu}^\prime)\overline{\psi(\vect{\rho}_p+\vect{\mu}^\prime-\vect{\nu})}.
\end{multline*}
Invoking the identity
$$
W_{\vect{\nu}}(\vect{\rho}_p+\vect{\mu}^\prime-\vect{\nu})=W_{-\vect{\nu}}(\vect{\rho}_p+\vect{\mu}^\prime)
$$
and using $\vect{\omega}_{n-r}\in S_n(-\vect{\omega}_r)$, we obtain
\begin{multline*}
\sum_{\vect{\nu}\in S_n(\vect{\omega}_{n-r})}\sum_{\substack{\vect{\mu}^\prime\in\Lambda^+_{p,|M|}\\ \vect{\mu}^\prime+\vect{\nu}\in\Lambda^+_{p,|M|}}}\phi(\vect{\rho}_p+\vect{\mu}^\prime)\overline{W_{\vect{\nu}}(\vect{\rho}_p+\vect{\mu}^\prime)\psi(\vect{\rho}_p+\vect{\mu}^\prime+\vect{\nu})}\\
=\sum_{\vect{\mu}^\prime\in\Lambda^+_{p,|M|}}\phi(\vect{\rho}_p+\vect{\mu}^\prime)\overline{\big(\hat{\cS}_{n-r,M}\psi\big)(\vect{\rho}_p+\vect{\mu}^\prime)}=\big(\phi,\hat{\cS}_{n-r,M}\psi\big)_{p,M}.
\end{multline*}
The same reasoning applies with minor changes to the case $M<0$. (In the above formulae, we need only change the sign of $\vect{\mu},\vect{\mu}^\prime$ and $\vect{\nu}$ in the arguments of the functions $W_{\pm\vect{\nu}}$, $\phi$ and $\psi$.)
\end{proof}

\section{Joint eigenfunctions}\label{sec:eigfuncs}
The main purpose of this section is to construct an orthonormal basis in the Hilbert space $L^2(\vect{\rho}_p+\sgn(M)\Lambda^+_{p,|M|})$ consisting of joint eigenfunctions of the discrete difference operators $\hat{\cS}_{r,M}$, $r=1,\dots,n-1$. In the process, the corresponding joint eigenvalues will be obtained explicitly.

\subsection{A factorised joint eigenfunction}\label{subsec:ground-top-state}
In a first step, we identify a positive joint eigenfunction $\Psi_{\vect{0},p}$ of factorised form. From Eq.~(4.3) in \cite{vDV}, we recall the trigonometric Pochhammer symbol
$$
(z:\sin_\alpha)_m\equiv
\begin{cases}
1,& \text{if}\ m=0\\[2pt]
\sin\frac{\alpha}{2}(z)\cdots \sin\frac{\alpha}{2}(z+m-1), & \text{if}\ m=1,2,\dots \\[2pt]
\dfrac{1}{\sin\frac{\alpha}{2}(z-1)\cdots \sin\frac{\alpha}{2}(z+m)}, & \text{if}\ m=-1,-2,\dots
\end{cases}
$$
which satisfies the easily verified identity
\begin{equation}\label{Pochhammer-id}
\frac{(z:\sin_\alpha)_{m+1}}{(z:\sin_\alpha)_m}=\sin\frac{\alpha}{2}(z+m),\quad m\in\Z.
\end{equation}
The key ingredient is the lattice function $\Delta_p\colon\Lambda_{p,|M|}^+\to\R$ given by
\begin{equation}\label{Delta}
\Delta_p(\vect{\mu})=\prod_{\vect{a}\in A_{n-1,p}^+}\frac{\sin\frac{\alpha}{2}\langle\vect{a},\vect{\rho}_p+\sgn(M)\vect{\mu}\rangle}{\sin\frac{\alpha}{2}\langle\vect{a},\vect{\rho}_p\rangle}\frac{(\langle\vect{a},\vect{\rho}_p\rangle+\sgn(M)g:\sin_\alpha)_{\sgn(M)\langle\vect{a},\vect{\mu}\rangle}}{(\langle\vect{a},\vect{\rho}_p\rangle+1-\sgn(M)g:\sin_\alpha)_{\sgn(M)\langle\vect{a},\vect{\mu}\rangle}}.
\end{equation}
Its basic properties (regularity and positivity) are readily established using the recurrence relations detailed in the following lemma.

\begin{lemma}\label{lemma:rec-rel}
For any $\vect{\mu}\in\Lambda_{p,|M|}^+$ and $\vect{\nu}\in S_n(\vect{\omega}_r)$, $r=1,\dots,n-1$ satisfying $\vect{\mu}+\vect{\nu}\in\Lambda_{p,|M|}^+$, we have
\begin{equation}\label{rec-rel}
\frac{\Delta_p(\vect{\mu}+\vect{\nu})}{\Delta_p(\vect{\mu})}
=\frac{V_{\vect{\nu}}(\vect{\rho}_p+\sgn(M)\vect{\mu})}{V_{\vect{\nu}}(-\vect{\rho}_p-\sgn(M)(\vect{\mu}+\vect{\nu}))}.
\end{equation}
\end{lemma}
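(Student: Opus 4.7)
The plan is to verify the identity factor-by-factor over the positive roots $\vect{a}\in A_{n-1,p}^+$. The key observation is a trichotomy: for every $\vect{\nu}\in S_n(\vect{\omega}_r)$ and every root $\vect{a}\in A_{n-1}$, the inner product $\langle\vect{a},\vect{\nu}\rangle$ lies in $\{-1,0,+1\}$, as is immediate from the explicit form \eqref{omegak} of $\vect{\omega}_r$ combined with the $S_n$-invariance of $\langle\cdot,\cdot\rangle$. Abbreviating $\vect{y}\equiv\vect{\rho}_p+\sgn(M)\vect{\mu}$ and $\vect{y}'\equiv\vect{y}+\sgn(M)\vect{\nu}$, so that $\langle\vect{a},\vect{y}'\rangle-\langle\vect{a},\vect{y}\rangle=\sgn(M)\langle\vect{a},\vect{\nu}\rangle$, I would compute the LHS root-by-root: positive roots with $\langle\vect{a},\vect{\nu}\rangle=0$ contribute trivially (all three ratios in \eqref{Delta} equal $1$), while for $\varepsilon\equiv\langle\vect{a},\vect{\nu}\rangle=\pm 1$ the Pochhammer index $\sgn(M)\langle\vect{a},\vect{\mu}\rangle$ shifts by $\sgn(M)\varepsilon$, so that the recurrence \eqref{Pochhammer-id}---applied in the forward or backward direction according to the sign---collapses each Pochhammer ratio into a single sine or its reciprocal. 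A brief rearrangement then gives, uniformly in $\sgn(M)$, the per-root factor
$$
R_{\vect{a}}\equiv\frac{\sin\tfrac{\alpha}{2}\langle\vect{a},\vect{y}'\rangle}{\sin\tfrac{\alpha}{2}\langle\vect{a},\vect{y}\rangle}\cdot\frac{\sin\tfrac{\alpha}{2}(\langle\vect{a},\vect{y}\rangle+\varepsilon g)}{\sin\tfrac{\alpha}{2}(\langle\vect{a},\vect{y}'\rangle-\varepsilon g)}.
$$

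On the RHS, I would rewrite $V_{\vect{\nu}}(\vect{y})/V_{\vect{\nu}}(-\vect{y}')$ as a product over positive roots only. Positive roots with $\langle\vect{a},\vect{\nu}\rangle=1$ enter both $V_{\vect{\nu}}(\vect{y})$ and $V_{\vect{\nu}}(-\vect{y}')$ directly, while the remaining indices in the $V_{\vect{\nu}}$ products are negative roots with $\langle\vect{a},\vect{\nu}\rangle=1$; under $\vect{a}\mapsto-\vect{a}$ these correspond to positive roots with $\langle\vect{a},\vect{\nu}\rangle=-1$, and $\sin(-x)=-\sin(x)$ absorbs the resulting sign changes. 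After this reorganisation, each positive root with $\varepsilon=\pm 1$ contributes exactly $R_{\vect{a}}$, while those with $\varepsilon=0$ are absent altogether, in agreement with the trivial LHS contribution. Term-by-term comparison then establishes \eqref{rec-rel}.

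The main technical obstacle I expect is the sign bookkeeping: the Pochhammer symbol is defined piecewise at zero and for negative indices, $V_{\vect{\nu}}$ runs over all of $A_{n-1}$ rather than only $A_{n-1,p}^+$, and $\sgn(M)$ appears throughout. The crucial point to check is that the minus signs introduced by $\sin(-x)=-\sin(x)$---both from the relabelling $\vect{a}\mapsto-\vect{a}$ on the RHS and from the argument shifts on the LHS---cancel pairwise between numerator and denominator, so that the final identity is sign-free. Once this sign analysis is carried out with care, the proof reduces to the displayed per-root match.
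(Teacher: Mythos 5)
Your proposal is correct and follows essentially the same route as the paper's proof: both reduce the ratio $\Delta_p(\vect{\mu}+\vect{\nu})/\Delta_p(\vect{\mu})$ to the roots with $\langle\vect{a},\vect{\nu}\rangle=\pm1$ via the Pochhammer recurrence \eqref{Pochhammer-id}, and then match it to $V_{\vect{\nu}}(\vect{\rho}_p+\sgn(M)\vect{\mu})/V_{\vect{\nu}}(-\vect{\rho}_p-\sgn(M)(\vect{\mu}+\vect{\nu}))$ by folding the product over $A_{n-1}$ onto $A_{n-1,p}^+$ through $\vect{a}\mapsto-\vect{a}$, with $\sin(-x)=-\sin(x)$ absorbing the signs. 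The only cosmetic difference is the direction of the recombination (you fold the right-hand side onto $A_{n-1,p}^+$, the paper expands the left-hand side over $A_{n-1}$); the per-root factor $R_{\vect{a}}$ you display agrees with the paper's intermediate expression.
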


\begin{proof}
Let us consider the left-hand side of \eqref{rec-rel}. In the numerator we have $\Delta_p(\vect{\mu}+\vect{\nu})$, which differs from $\Delta_p(\vect{\mu})$ \eqref{Delta} only in factors that have $\langle\vect{a},\vect{\nu}\rangle=\pm 1$. Hence, by applying \eqref{Pochhammer-id}, the ratio $\Delta_p(\vect{\mu}+\vect{\nu})/\Delta_p(\vect{\mu})$ can be rewritten as
\begin{multline*}
\frac{\Delta_p(\vect{\mu}+\vect{\nu})}{\Delta_p(\vect{\mu})}\\
=\prod_{\substack{\vect{a}\in A_{n-1,p}^+\\\langle\vect{a},\vect{\nu}\rangle=1}}
\frac{\sin\frac{\alpha}{2}\langle\vect{a},\vect{\rho}_p+\sgn(M)(\vect{\mu}+\vect{\nu})\rangle}{\sin\frac{\alpha}{2}\langle\vect{a},\vect{\rho}_p+\sgn(M)\vect{\mu}\rangle}
\frac{\sin\frac{\alpha}{2}(\langle\vect{a},\vect{\rho}_p+\sgn(M)\vect{\mu}\rangle+g)}{\sin\frac{\alpha}{2}(\langle\vect{a},\vect{\rho}_p+\sgn(M)(\vect{\mu}+\vect{\nu})\rangle-g)}\\
\times\prod_{\substack{\vect{a}\in A_{n-1,p}^+\\\langle\vect{a},\vect{\nu}\rangle=-1}}
\frac{\sin\frac{\alpha}{2}\langle\vect{a},\vect{\rho}_p+\sgn(M)(\vect{\mu}+\vect{\nu})\rangle}{\sin\frac{\alpha}{2}\langle\vect{a},\vect{\rho}_p+\sgn(M)\vect{\mu}\rangle}
\frac{\sin\frac{\alpha}{2}(\langle\vect{a},\vect{\rho}_p+\sgn(M)\vect{\mu}\rangle-g)}{\sin\frac{\alpha}{2}(\langle\vect{a},\vect{\rho}_p+\sgn(M)(\vect{\mu}+\vect{\nu})\rangle+g)}.
\end{multline*}
Recombining the two products yields
\begin{multline*}
\frac{\Delta_p(\vect{\mu}+\vect{\nu})}{\Delta_p(\vect{\mu})}\\
=\prod_{\substack{\vect{a}\in A_{n-1}\\\langle\vect{a},\vect{\nu}\rangle=1}}
\frac{\sin\frac{\alpha}{2}(\langle\vect{a},\vect{\rho}_p+\sgn(M)\vect{\mu}\rangle+g)}
{\sin\frac{\alpha}{2}\langle\vect{a},\vect{\rho}_p+\sgn(M)\vect{\mu}\rangle}
\frac{\sin\frac{\alpha}{2}\langle\vect{a},-\vect{\rho}_p-\sgn(M)(\vect{\mu}+\vect{\nu})\rangle}{\sin\frac{\alpha}{2}(\langle\vect{a},-\vect{\rho}_p-\sgn(M)(\vect{\mu}+\vect{\nu})\rangle+g)}\\
=\frac{V_{\vect{\nu}}(\vect{\rho}_p+\sgn(M)\vect{\mu})}{V_{\vect{\nu}}(-\vect{\rho}_p-\sgn(M)(\vect{\mu}+\vect{\nu}))}.\qquad\qquad\qquad\qquad\qquad\qquad\qquad\qquad\qquad\quad
\end{multline*}
This concludes the proof.
\end{proof}

From \eqref{Delta}, it is clear that $\Delta_p(\vect{0})=1$. Starting from $\vect{\mu}=\vect{0}$, any lattice point $\vect{\mu}=\sum_{j=1}^{n-1}m_j\vect{\omega}_{j,p}\in\Lambda^+_{p,|M|}$ can be reached by successively adding fundamental weights $\vect{\omega}_{j,p}$. Using the recurrence relations \eqref{rec-rel}, we can thus write $\Delta_p(\vect{\mu})$ as a (finite) product of $\Delta_p(\vect{0})=1$
and ratios of the form $V_{\vect{\omega}_{j,p}}(\vect{\rho}_p+\sgn(M)\vect{\mu}^\prime)/V_{\vect{\omega}_{j,p}}(-\vect{\rho}_p-\sgn(M)(\vect{\mu}^\prime+\vect{\omega}_{j,p}))$, with $\vect{\mu}^\prime,\vect{\mu}^\prime+\vect{\omega}_{j,p}\in\Lambda^+_{p,|M|}$. Since Lemma \ref{Lemma:Vnu} ensures that these ratios are regular and positive, we have the following corollary.

\begin{corollary}\label{Cor:pos}
As long as $p\perp n$ and $\alpha,g>0$ satisfy Conditions 1--2 in Lemma \ref{Lemma:Vnu}, the lattice function $\Delta_p$ \eqref{Delta} is regular and positive, that is
$$
\Delta_p(\vect{\mu})>0\quad\text{at all}\quad\vect{\mu}\in\Lambda_{p,|M|}^+.
$$
\end{corollary}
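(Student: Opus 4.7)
The plan is to argue by induction on the total weight $|\vect{\mu}| = \sum_{j=1}^{n-1} m_j$, where $\vect{\mu} = \sum_{j=1}^{n-1} m_j\vect{\omega}_{j,p}\in\Lambda^+_{p,|M|}$. The base case is immediate: when $\vect{\mu} = \vect{0}$, each Pochhammer symbol in \eqref{Delta} equals the empty product $1$, and the sine ratios trivially reduce to $1$, so $\Delta_p(\vect{0}) = 1 > 0$.

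For the inductive step, fix $\vect{\mu}\in\Lambda^+_{p,|M|}$ with $|\vect{\mu}| \geq 1$, and pick any index $j$ with $m_j > 0$. Setting $\vect{\mu}' = \vect{\mu} - \vect{\omega}_{j,p}$, one checks directly that $\vect{\mu}'\in\Lambda^+_{p,|M|}$ (its coordinates in the $\vect{\omega}_{k,p}$ basis are non-negative, and $|\vect{\mu}'| = |\vect{\mu}| - 1 \leq |M|$), so that the inductive hypothesis gives $\Delta_p(\vect{\mu}') > 0$. Applying Lemma \ref{lemma:rec-rel} with $\vect{\nu} = \vect{\omega}_{j,p}\in S_n(\vect{\omega}_{j,p})$, one obtains
$$
\Delta_p(\vect{\mu}) = \Delta_p(\vect{\mu}')\cdot\frac{V_{\vect{\omega}_{j,p}}(\vect{\rho}_p + \sgn(M)\vect{\mu}')}{V_{\vect{\omega}_{j,p}}(-\vect{\rho}_p - \sgn(M)(\vect{\mu}'+\vect{\omega}_{j,p}))}.
$$
Since $\vect{\mu}',\vect{\mu}'+\vect{\omega}_{j,p}\in\Lambda^+_{p,|M|}$, Lemma \ref{Lemma:Vnu} (specifically \eqref{VnuIneqs}) guarantees that the product of the two $V$-factors in this ratio is strictly positive and finite; equivalently, the two factors are non-zero, finite, and of the same sign, making the ratio a strictly positive finite real number. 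Multiplying a positive number by $\Delta_p(\vect{\mu}') > 0$ yields $\Delta_p(\vect{\mu}) > 0$, completing the induction.

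The only subtlety worth flagging is the requirement that each intermediate point $\vect{\mu}'$ used in the induction lies in $\Lambda^+_{p,|M|}$, so that Lemma \ref{Lemma:Vnu} applies at every step; as noted, this is automatic because subtracting a single fundamental weight $\vect{\omega}_{j,p}$ with $m_j > 0$ preserves both the non-negativity of the coordinates and the bound $|\vect{\mu}| \leq |M|$. I do not expect any other obstacle: once the recurrence relation is in place, the positivity and regularity of $\Delta_p$ follow from the already-established positivity and finiteness of the $V$-ratios inside $\Lambda^+_{p,|M|}$.
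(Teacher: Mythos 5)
Your proof is correct and follows essentially the same route as the paper: the authors likewise observe that $\Delta_p(\vect{0})=1$, reach any $\vect{\mu}\in\Lambda^+_{p,|M|}$ by successively adding fundamental weights $\vect{\omega}_{j,p}$, and conclude positivity from the recurrence relation \eqref{rec-rel} together with the positivity and finiteness of the $V$-ratios guaranteed by Lemma \ref{Lemma:Vnu}. Your version merely formalises this as an induction on $\sum_j m_j$, with the same intermediate-point check.
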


Moreover, if we use the recurrence relations \eqref{rec-rel} to extend the definition of $\Delta_p(\vect{\mu})$ to lattice points $\vect{\mu}$ `one step' outside $\Lambda_{p,|M|}^+$, then \eqref{Vzer} immediately implies that the resulting lattice function vanishes at all boundary points, i.e.~$\vect{\mu}\notin\Lambda^+_{p,|M|}$ and $\vect{\mu}-\vect{\nu}\in\Lambda^+_{p,|M|}$ for some $\vect{\nu}\in S_n(\vect{\omega}_r)$ with $r=1,\dots,n-1$. To see that the extension is well-defined, note that the proof of Lemma \ref{Lemma:Vnu} entails that $V_{\vect{\nu}}(-\vect{\rho}_p-\sgn(M)(\vect{\mu}+\vect{\nu}))=0$ only if $\vect{\mu}\notin\Lambda^+_{p,|M|}$ and $\vect{\mu}+\vect{\nu}\in\Lambda^+_{p,|M|}$.

Having established the necessary properties of $\Delta_p$, we define the lattice function $\Psi_{\vect{0},p}\colon\vect{\rho}_p+\sgn(M)\Lambda^+_{p,|M|}\to\C$ by setting
\begin{equation}\label{Psi_0}
\Psi_{\vect{0},p}(\vect{\rho}_p+\sgn(M)\vect{\mu})=\frac{1}{\cN_{\vect{0}}^{1/2}}\Delta_p(\vect{\mu})^{1/2},\quad \vect{\mu}\in \Lambda_{p,|M|}^+,
\end{equation}
with the normalisation constant $\cN_{\vect{0}}$ determined by
$$
(\Psi_{\vect{0},p},\Psi_{\vect{0},p})_{p,M}=\frac{1}{\cN_{\vect{0}}}\sum_{\vect{\mu}\in\Lambda^+_{p,|M|}}\Delta_p(\vect{\mu})=1.
$$
In the $p=1, M>0$ case, van Diejen and Vinet \cite[Proposition 4.4]{vDV} arrived at a compact product formula for $\cN_{\vect{0}}$ by deriving a truncated version of a summation formula due to Aomoto, Ito and Macdonald \cite{Aom,Ito,Mac}. As we now demonstrate, the pertinent product formula holds true, after a few minor modifications, also in the general case.

\begin{proposition}
Assuming $p\perp n$ and $\alpha,g>0$ satisfy Conditions 1--2 in Lemma \ref{Lemma:Vnu} , we have
\begin{equation}
\cN_{\vect{0}}=2^{(n-1)(|M|-1)}n\prod_{k=1}^{n-1}\big(1+\sgn(M)kg:\sin_\alpha\big)_{|M|-1}.
\label{cN0val}
\end{equation}
\end{proposition}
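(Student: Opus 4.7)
The strategy is twofold: first, use the Weyl group element associated with $\sigma$ to absorb all $p$-dependence and reduce to the standard base ($p=1$); second, extend the truncated Aomoto--Ito--Macdonald summation used by van Diejen and Vinet to arbitrary sign of $M$.

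\emph{Stage 1 (Weyl reduction).} Let $w\in W_{A_{n-1}}$ be the element with $w\vect{a}_j=\vect{a}_{\sigma(j),p}$, so that also $w\vect{\omega}_j=\vect{\omega}_{\sigma(j),p}$ and $w\vect{a}_{\max}=\vect{a}_{\max,p}$. Then $\vect{\lambda}\mapsto w\vect{\lambda}$ is a bijection $\Lambda^+_{1,|M|}\to\Lambda^+_{p,|M|}$. A direct calculation from the defining formula of $\vect{\rho}_p$ gives $w^{-1}\vect{\rho}_p=\vect{\rho}-(2\pi/\alpha)\vect{\xi}$ with $\vect{\xi}=\sum_{j\in\sigma^{-1}\{n-p+1,\dots,n-1\}}\vect{\omega}_j$. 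Substituting $\vect{a}=w\vect{b}$ ($\vect{b}\in A_{n-1}^+$) in \eqref{Delta} then replaces every $\sin\tfrac{\alpha}{2}(\langle\vect{a},\vect{\rho}_p\rangle+\cdots)$ by $(-1)^{\langle\vect{b},\vect{\xi}\rangle}\sin\tfrac{\alpha}{2}(\langle\vect{b},\vect{\rho}\rangle+\cdots)$. Tracking these signs through the two $\sin$-factors in the leading ratio (one direct cancellation) and through the two Pochhammer symbols (each containing $|\langle\vect{b},\vect{\lambda}\rangle|$ $\sin$-factors, independently of $\sgn(M)$), one finds that all signs cancel. This yields $\Delta_p(w\vect{\lambda})=\Delta_1(\vect{\lambda})$ and hence
\[
\cN_{\vect{0}}=\sum_{\vect{\lambda}\in\Lambda^+_{1,|M|}}\Delta_1(\vect{\lambda}).
\]

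\emph{Stage 2 (truncated AIM sum).} For $\sgn(M)=+1$ the right-hand side is precisely the sum evaluated by van Diejen and Vinet in \cite[Proposition~4.4]{vDV}, whose proof extracts the answer from a finite version of the Aomoto--Ito--Macdonald constant-term identity on $A_{n-1}$. For $\sgn(M)=-1$, the plan is to re-run that derivation after reversing the sign of $g$ inside $\vect{\rho}$ and in the Pochhammer arguments: the telescoping and summation-by-parts manipulations rest on Lemma~\ref{lemma:rec-rel} and on \eqref{Pochhammer-id}, both of which are sign-insensitive, so they carry over and produce $(1-kg:\sin_\alpha)_{|M|-1}$ in place of $(1+kg:\sin_\alpha)_{M-1}$, matching the $\sgn(M)$-dependence of \eqref{cN0val}. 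Alternatively, one could try the reflection $\vect{\lambda}\mapsto -w_0\vect{\lambda}$ together with $w_0\vect{\rho}=-\vect{\rho}$ to map the $M<0$ sum into an $M>0$-type sum, but the Pochhammer symbols with negative indices transform awkwardly under this substitution, so this route requires extra bookkeeping.

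\emph{Main obstacle.} The chief difficulty is the $\sgn(M)=-1$ case in Stage~2: $\Delta_1$ then contains Pochhammer symbols with negative indices (reciprocals of $\sin$-products), so the positivity and telescoping structures that underpin vDV's argument must be revisited with careful attention to signs and potential singularities. To keep intermediate expressions regular one should first assume Condition~3 of Lemma~\ref{Lemma:Vnu}, derive the identity there, and then extend it by continuity in $g$ to all admissible type~(i) couplings. A secondary but non-trivial check in Stage~1 is that the parity of $\langle\vect{b},\vect{\xi}\rangle\cdot|\langle\vect{b},\vect{\lambda}\rangle|$ coincides in the upper and lower Pochhammer symbols, which hinges on both having the same length $|\langle\vect{b},\vect{\lambda}\rangle|$ regardless of the sign of $M$.
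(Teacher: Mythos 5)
Your Stage 1 Weyl reduction is a sound computation---the $(-1)^{\langle\vect{b},\vect{\xi}\rangle}$ factors do cancel as you say, since both Pochhammer symbols carry the same index $\sgn(M)\langle\vect{b},\vect{\lambda}\rangle$---but it does not reduce the problem to the result you then cite. After conjugating by $w$ you obtain the sum $\sum_{\vect{\lambda}\in\Lambda^+_{1,|M|}}\Delta_1(\vect{\lambda})$ in which $\alpha$ and $g$ are still tied by the quantisation condition \eqref{quant}, i.e.\ $\alpha=2\pi p/(ng+M)$, whereas \cite[Proposition 4.4]{vDV} evaluates that sum only on the curve $\alpha=2\pi/(ng+M)$. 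For $p>1$ these are different parameter families, and the truncated Aomoto--Ito--Macdonald summation is not an identity in independent $(\alpha,g)$: both the truncation of the infinite sum and the product evaluation rely on the constraint. So the assertion that your reduced sum is ``precisely the sum evaluated by van Diejen and Vinet'' hides the actual content of the generalisation, namely verifying that vDV's Eq.~(A.4) survives the change of constraint. The paper does exactly this, substituting $\alpha\to p\alpha$, $A_{n-1}^+\to A_{n-1,p}^+$, $\Lambda^+_M\to\Lambda^+_{p,M}$ and $\vect{\rho}\to\vect{\rho}_p$ into (A.4) to obtain \eqref{sumForm}; your Weyl conjugation merely repackages the left-hand side and does not supply that identity.

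The second gap is the $M<0$ case, which you correctly flag as the main obstacle but leave as a plan. ``Re-running the derivation with $g\to-g$'' is not sign-insensitive in the one place that matters: under $g\to-g$ the constraint forces $\alpha\to-\hat{\alpha}$ with $\hat{\alpha}=2\pi p/(ng-M)$ and $\vect{\rho}_p\to-\vect{\rho}_p\pmod{\tfrac{2\pi}{\alpha}\Lambda}$, and you never address this. The paper's proof first analytically continues \eqref{sumForm} to complex $g$, then performs this substitution and invokes two further identities---$\big(\gamma-\langle\vect{a},\vect{\rho}_p\rangle:\sin_{-\hat{\alpha}}\big)_{\langle\vect{a},\vect{\mu}\rangle}=1\big/\big(1-\gamma+\langle\vect{a},\vect{\rho}_p\rangle:\sin_{\hat{\alpha}}\big)_{-\langle\vect{a},\vect{\mu}\rangle}$ and the invariance of $\prod_{k}\big(1-kg:\sin_{\hat{\alpha}}\big)_{M-1}$ under $\hat{\alpha}\to-\hat{\alpha}$---to land exactly on the $M<0$ form of \eqref{cN0val} with its negative Pochhammer indices. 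None of this bookkeeping appears in your outline, and without it the $\sgn(M)$-dependence of the right-hand side is asserted rather than derived. A minor further point: Condition 3 of Lemma \ref{Lemma:Vnu} plays no role in this proposition (it is stated without it), so your continuity-in-$g$ detour is unnecessary.
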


\begin{proof}
Let us first consider the $M>0$ case. Following the discussion in Appendix A of \cite{vDV}, we find that Eq.~(A.4) remains valid upon setting $N=n-1$ and taking $\alpha\to p\alpha$, $A_{n-1}^+\to A_{n-1,p}^+$, $\Lambda_M^+\to\Lambda_{p,M}^+$ and (due to the $2\pi/\alpha$-antiperiodicity of $\sin\frac{\alpha}{2}x$) $\vect{\rho}\to\vect{\rho}_p$. This yields
\begin{multline}\label{sumForm}
\sum_{\vect{\mu}\in\Lambda_{p,M}^+}\prod_{\vect{a}\in A_{n-1,p}^+}\frac{\sin\frac{\alpha}{2}\langle \vect{a},\vect{\rho}_p+\vect{\mu}\rangle}{\sin\frac{\alpha}{2}\langle \vect{a},\vect{\rho}_p\rangle} \frac{\big(g+\langle \vect{a},\vect{\rho}_p\rangle:\sin_\alpha\big)_{\langle \vect{a},\vect{\mu}\rangle}}{\big(1-g+\langle \vect{a},\vect{\rho}_p\rangle:\sin_\alpha\big)_{\langle\vect{a},\vect{\mu}\rangle}}\\
=2^{(n-1)(M-1)}n\prod_{k=1}^{n-1}\big(1+kg:\sin_\alpha\big)_{M-1}
\end{multline}
for
$$
\alpha=\frac{2\pi p}{ng+M},\quad M\in\N,\quad g>0.
$$

By analytic continuation in $g$, \eqref{sumForm} extends to generic complex $g$. (We need only ensure that all denominators in the left-hand side are non-zero.) Taking $g\to -g$, which entails
$$
\vect{\rho}_p\to -\vect{\rho}_p\pmod{\tfrac{2\pi}{\alpha}\Lambda},\quad \alpha\to -\hat{\alpha},\ \ \hat{\alpha}\equiv \frac{2\pi p}{ng-M},
$$
the left-hand side of \eqref{sumForm} becomes
$$
\sum_{\vect{\mu}\in\Lambda_{p,M}^+}\prod_{\vect{a}\in A_{n-1,p}^+}\frac{\sin\frac{\hat{\alpha}}{2}\langle \vect{a},\vect{\rho}_p-\vect{\mu}\rangle}{\sin\frac{\hat{\alpha}}{2}\langle \vect{a},\vect{\rho}_p\rangle} \frac{\big(-g-\langle \vect{a},\vect{\rho}_p\rangle:\sin_{-\hat{\alpha}}\big)_{\langle \vect{a},\vect{\mu}\rangle}}{\big(1+g-\langle \vect{a},\vect{\rho}_p\rangle:\sin_{-\hat{\alpha}}\big)_{\langle\vect{a},\vect{\mu}\rangle}}.
$$
A direct computation reveals
$$
\big(\gamma-\langle \vect{a},\vect{\rho}_p\rangle:\sin_{-\hat{\alpha}}\big)_{\langle\vect{a},\vect{\mu}\rangle}=1\Big/\big(1-\gamma+\langle \vect{a},\vect{\rho}_p\rangle:\sin_{\hat{\alpha}}\big)_{-\langle\vect{a},\vect{\mu}\rangle},\quad \gamma\in\C.
$$
Moreover, using $ng=2\pi p/\hat{\alpha}+M$, it is straightforward to verify
$$
\prod_{k=1}^{n-1}\big(1-kg:\sin_{-\hat{\alpha}}\big)_{M-1}=\prod_{k=1}^{n-1}\big(1-kg:\sin_{\hat{\alpha}}\big)_{M-1}.
$$
We thus infer
\begin{multline*}
\sum_{\vect{\mu}\in\Lambda_{p,M}^+}\prod_{\vect{a}\in A_{n-1,p}^+}\frac{\sin\frac{\hat{\alpha}}{2}\langle \vect{a},\vect{\rho}_p-\vect{\mu}\rangle}{\sin\frac{\hat{\alpha}}{2}\langle \vect{a},\vect{\rho}_p\rangle} \frac{\big(-g+\langle \vect{a},\vect{\rho}_p\rangle:\sin_{\hat{\alpha}}\big)_{-\langle \vect{a},\vect{\mu}\rangle}}{\big(1+g+\langle \vect{a},\vect{\rho}_p\rangle:\sin_{\hat{\alpha}}\big)_{-\langle \vect{a},\vect{\mu}\rangle}}\\
=2^{(n-1)(M-1)}n\prod_{k=1}^{n-1}\big(1-kg:\sin_{\hat{\alpha}}\big)_{M-1},
\end{multline*}
which amounts to the $M<0$ case of \eqref{cN0val}.
\end{proof}

Due to Corollary \ref{Cor:pos}, we know that $\Psi_{\vect{0},p}$ is both regular and positive. To show that it has the desired eigenfunction property, we introduce the difference operators
$$
\hat{\cD}_{r,M}\equiv \sum_{\vect{\nu}\in S_n(\vect{\omega}_r)}V_{\vect{\nu}}(\vect{x})\exp(\sgn(M)\langle\vect{\nu},\partial/\partial \vect{x}\rangle),\quad r=1,\dots,n-1.
$$
The vanishing property \eqref{Vzer} of $V_{\vect{\nu}}(\vect{x})$ ensures that $\hat{\cD}_{r,M}$ admits a well-defined restriction onto $L^2(\vect{\rho}_p+\sgn(M)\Lambda^+_{p,|M|})$, given explicitly by
$$
\big(\hat{\cD}_{r,M}\phi\big)(\vect{\rho}_p+\sgn(M)\vect{\mu}) = \sum_{\substack{\vect{\nu}\in S_n(\vect{\omega}_r)\\ \vect{\mu}+\vect{\nu}\in\Lambda^+_{p,|M|}}} V_{\vect{\nu}}(\vect{\rho}_p+\sgn(M)\vect{\mu})\phi(\vect{\rho}_p+\sgn(M)(\vect{\mu}+\vect{\nu})).
$$
In this case, the $g$-values excluded in Condition 3 of Lemma \ref{Lemma:Vnu} cause no issues. Using the recurrence relations \eqref{rec-rel}, the next result is readily established by a direct computation.

\begin{corollary}\label{Cor:iTRel}
Let $p\perp n$ and $\alpha,g>0$ be such that Conditions 1--2 in Lemma \ref{Lemma:Vnu} are satisfied. Then, for any $\phi\in L^2(\vect{\rho}_p+\sgn(M)\Lambda^+_{p,|M|})$, we have
\begin{equation}\label{iTRel}
\big(\hat{\cS}_{r,M}(\Psi_{\vect{0},p}\phi)\big)(\vect{\rho}_p+\sgn(M)\vect{\mu}) = \big(\Psi_{\vect{0},p}\hat{\cD}_{r,M}(\phi)\big)(\vect{\rho}_p+\sgn(M)\vect{\mu}),
\end{equation}
with $r=1,\dots,n-1$.
\end{corollary}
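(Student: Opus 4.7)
The plan is to expand both sides of \eqref{iTRel} using the explicit action formulas \eqref{hcHrMAct} and the analogous one for $\hat{\cD}_{r,M}$ displayed just above, and reduce the equality to the recurrence supplied by Lemma \ref{lemma:rec-rel}. Inserting definition \eqref{Psi_0} of $\Psi_{\vect{0},p}$, the left-hand side becomes a sum over $\vect{\nu}\in S_n(\vect{\omega}_r)$ with $\vect{\mu}+\vect{\nu}\in\Lambda^+_{p,|M|}$ whose summand is
\[
\cN_{\vect{0}}^{-1/2}\, W_{\vect{\nu}}(\vect{\rho}_p + \sgn(M)\vect{\mu}) \, \Delta_p(\vect{\mu}+\vect{\nu})^{1/2} \, \phi(\vect{\rho}_p + \sgn(M)(\vect{\mu}+\vect{\nu})),
\]
while the right-hand side yields the same sum with summand
\[
\cN_{\vect{0}}^{-1/2}\, \Delta_p(\vect{\mu})^{1/2} \, V_{\vect{\nu}}(\vect{\rho}_p + \sgn(M)\vect{\mu}) \, \phi(\vect{\rho}_p + \sgn(M)(\vect{\mu}+\vect{\nu})).
\]
Thus, after matching the common factors $\cN_{\vect{0}}^{-1/2}\,\phi(\vect{\rho}_p+\sgn(M)(\vect{\mu}+\vect{\nu}))$, the claim reduces term by term in $\vect{\nu}$ to the pointwise identity
\[
W_{\vect{\nu}}(\vect{\rho}_p + \sgn(M)\vect{\mu}) \, \Delta_p(\vect{\mu}+\vect{\nu})^{1/2} = V_{\vect{\nu}}(\vect{\rho}_p + \sgn(M)\vect{\mu}) \, \Delta_p(\vect{\mu})^{1/2}
\]
for every $\vect{\nu}\in S_n(\vect{\omega}_r)$ with $\vect{\mu}+\vect{\nu}\in\Lambda^+_{p,|M|}$.

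To verify this pointwise identity, I would square both sides and substitute $W_{\vect{\nu}}(\vect{x})^2 = V_{\vect{\nu}}(\vect{x}) V_{\vect{\nu}}(-\vect{x}-\sgn(M)\vect{\nu})$ from the definition \eqref{W}. Since $V_{\vect{\nu}}(\vect{\rho}_p + \sgn(M)\vect{\mu})$ is non-zero by Lemma \ref{Lemma:Vnu}, one factor of it cancels and what remains is precisely the recurrence relation \eqref{rec-rel} of Lemma \ref{lemma:rec-rel}. Upgrading the squared identity to the unsquared one amounts to a sign check: $\Delta_p$ is positive by Corollary \ref{Cor:pos}, so the $\Delta_p^{1/2}$ factors are canonically positive reals, whereas the sign convention recorded after \eqref{W} gives
\[
\sgn\bigl(W_{\vect{\nu}}(\vect{\rho}_p+\sgn(M)\vect{\mu})\bigr) = \sgn\bigl(V_{\vect{\nu}}(\vect{\rho}_p+\sgn(M)\vect{\mu})\bigr) = s(g;\vect{\nu}),
\]
so both sides of the pointwise identity carry the same overall sign $s(g;\vect{\nu})$.

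This sign-matching step is the only real subtlety; the rest of the argument is a direct substitution. One minor technicality is that when $g$ belongs to the set of values excluded by Condition 3 of Lemma \ref{Lemma:Vnu}, the formula \eqref{hcHrMAct} is interpreted by continuity in $g$ as noted before Proposition \ref{Prop:adj}, and the same convention extends \eqref{iTRel} from generic $g$ to all admissible values.
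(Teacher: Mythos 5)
Your proof is correct and is precisely the ``direct computation'' via the recurrence relations \eqref{rec-rel} that the paper invokes without writing out: you expand both sides with \eqref{hcHrMAct} and its $\hat{\cD}_{r,M}$ analogue, reduce to a term-by-term identity over the common index set $\{\vect{\nu}:\vect{\mu}+\vect{\nu}\in\Lambda^+_{p,|M|}\}$, square to land on Lemma \ref{lemma:rec-rel}, and settle the sign using $\sgn(W_{\vect{\nu}})=\sgn(V_{\vect{\nu}})$ as recorded after \eqref{W} together with the positivity of $\Delta_p$ from Corollary \ref{Cor:pos}. This matches the paper's intended argument; nothing further is needed.
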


At this point, we need the elementary symmetric functions
\begin{equation}\label{cEr}
\cE_r(\vect{u})\equiv \sum_{\vect{\nu}\in S_n(\vect{\omega}_r)}e^{\mathrm{i}\alpha\langle\vect{\nu},\vect{u}\rangle},\quad r=1,\dots,n-1
\end{equation}
and the weighted half-sum of positive roots $\vect{\rho}$ \eqref{rho}. For later reference, we note the identities
\begin{equation}\label{cEIds}
\cE_{n-r}(\vect{u}) = \cE_r(-\vect{u}),\quad \overline{\cE_r(\vect{u})} = \cE_r(-\vect{u}),\quad \vect{u}\in\R^n,
\end{equation}
where the former equality is a simple consequence of $\vect{\omega}_{n-r}\in S_n(-\vect{\omega}_r)$ and the latter is clear from \eqref{cEr}. Setting $\phi=1$ in \eqref{iTRel} and appealing to the Macdonald identity
$$
\sum_{\vect{\nu}\in S_n(\vect{\omega}_r)}V_{\vect{\nu}}(\vect{x}) = \cE_r(\vect{\rho})
$$
from Theorem 2.8 in \cite{Mac72}, we obtain the requisite joint eigenfunction property. For easy reference, the precise result is detailed in the following proposition.

\begin{proposition}\label{Prop:Psi0}
Under the assumptions of Corollary \ref{Cor:iTRel}, we have the joint eigenfunction property
$$
\hat\cS_{r,M}\Psi_{\vect{0},p} = \cE_r(\vect{\rho})\Psi_{\vect0,p},\quad r=1,\dots,n-1.
$$
\end{proposition}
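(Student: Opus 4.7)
The plan is to reduce the statement to the intertwining identity of Corollary \ref{Cor:iTRel} applied to the constant function $\phi\equiv 1$, combined with the Macdonald identity quoted immediately before the proposition. Specifically, setting $\phi\equiv 1$ in \eqref{iTRel} yields
\begin{equation*}
\big(\hat{\cS}_{r,M}\Psi_{\vect{0},p}\big)(\vect{\rho}_p+\sgn(M)\vect{\mu})=\Psi_{\vect{0},p}(\vect{\rho}_p+\sgn(M)\vect{\mu})\cdot\big(\hat{\cD}_{r,M}1\big)(\vect{\rho}_p+\sgn(M)\vect{\mu}),
\end{equation*}
so the proposition is equivalent to showing that $\big(\hat{\cD}_{r,M}1\big)(\vect{\rho}_p+\sgn(M)\vect{\mu})=\cE_r(\vect{\rho})$ for every $\vect{\mu}\in\Lambda^+_{p,|M|}$.

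Next I would unpack the explicit restriction of $\hat{\cD}_{r,M}$ to the lattice given in the paragraph preceding Corollary \ref{Cor:iTRel}. Applied to the constant function $1$, this reduces the task to the pointwise identity
\begin{equation*}
\sum_{\substack{\vect{\nu}\in S_n(\vect{\omega}_r)\\ \vect{\mu}+\vect{\nu}\in\Lambda^+_{p,|M|}}} V_{\vect{\nu}}(\vect{\rho}_p+\sgn(M)\vect{\mu})=\cE_r(\vect{\rho}).
\end{equation*}
The first move is to drop the restriction $\vect{\mu}+\vect{\nu}\in\Lambda^+_{p,|M|}$ on the summation: by the vanishing statement \eqref{Vzer} in Lemma \ref{Lemma:Vnu}, any term with $\vect{\mu}+\vect{\nu}\notin\Lambda^+_{p,|M|}$ has $V_{\vect{\nu}}(\vect{\rho}_p+\sgn(M)\vect{\mu})=0$ and therefore contributes nothing, so the sum may be extended to the entire Weyl orbit $S_n(\vect{\omega}_r)$ without altering its value. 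Finiteness of the remaining terms is guaranteed by \eqref{VnuFin}, since $\vect{\rho}_p+\sgn(M)\vect{\mu}\in\overline{\Sigma}_{g,p}\subset\cA_n$.

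Once the sum is unrestricted, what remains is exactly Macdonald's identity (Theorem 2.8 in \cite{Mac72}) evaluated at $\vect{x}=\vect{\rho}_p+\sgn(M)\vect{\mu}$, whose essential content is that $\sum_{\vect{\nu}\in S_n(\vect{\omega}_r)} V_{\vect{\nu}}(\vect{x})$ is in fact \emph{independent} of $\vect{x}$ and equals $\cE_r(\vect{\rho})$. No genuine obstacle is expected, since the entire argument is an assembly of Corollary \ref{Cor:iTRel}, the vanishing clause of Lemma \ref{Lemma:Vnu}, and the cited Macdonald identity. The only subtlety worth flagging is the cosmetic appearance of $\vect{\rho}$ (rather than $\vect{\rho}_p$) on the right-hand side, which is consistent precisely because the Macdonald sum is a constant function of its argument and so carries no dependence on the $p$-shifted base point at which we evaluate.
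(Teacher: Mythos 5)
Your proposal is correct and follows essentially the same route as the paper: the paper likewise obtains the result by setting $\phi=1$ in the intertwining relation \eqref{iTRel} and invoking the Macdonald identity $\sum_{\vect{\nu}\in S_n(\vect{\omega}_r)}V_{\vect{\nu}}(\vect{x})=\cE_r(\vect{\rho})$ from Theorem 2.8 of \cite{Mac72}. Your extra remarks on dropping the lattice restriction via \eqref{Vzer} and on the constancy of the Macdonald sum merely spell out details the paper leaves implicit.
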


For type (i) $g$-intervals \eqref{typei} such that $M>0$, the $\hat{\cD}_{r,M}$ coincide with the standard $A_{n-1}$ Macdonald-Ruijsenaars operators
\begin{equation}\label{cDr}
\hat{\cD}_r\equiv \sum_{\vect{\nu}\in S_n(\vect{\omega}_r)}V_{\vect{\nu}}(\vect{x})\exp(\langle\vect{\nu},\partial/\partial \vect{x}\rangle),\quad r=1,\dots,n-1.
\end{equation}
More generally, we have (cf.~\eqref{hcHrMtohcHr})
\begin{equation}\label{cDrMtocDr}
\hat{\cD}_{r,M}(g)=\begin{cases}
\hat{\cD}_r(g),& M>0\\
\hat{\cD}_{n-r}(-g),& M<0
\end{cases}
\end{equation}
which, when combined with Corollary \ref{Cor:iTRel}, suggests that a joint eigenbasis for the discrete difference operators $\hat{\cS}_{r,M}$ can be constructed in terms of $\Psi_{\vect{0},p}$ and the $A_{n-1}$ Macdonald polynomials. With this observation in mind, we devote the next section to a brief review of pertinent results from the theory of Macdonald polynomials, see e.g.~\cite{Mac95,Mac00,Sto11} for further details. A reader already familiar with this theory may wish to skip ahead to Subsection \ref{subsec:eigbasis}.

\subsection{$A_{n-1}$ Macdonald polynomials}\label{subsec:macdonald}
From Subsection \ref{subsec:rootnot}, we recall the notation \eqref{aj}--\eqref{QLamA} and the weighted half-sum of positive roots \eqref{rho}. The elementary symmetric functions \eqref{cEr} and the monomial symmetric functions
$$
m_{\vect{\lambda}}(\vect{x})\equiv \sum_{\vect{\mu}\in S_n(\vect{\lambda})}e^{\mathrm{i}\alpha\langle\vect{\mu},\vect{x}\rangle},\quad \vect{\lambda}\in\Lambda^+,
$$
will also be needed. Note that the dominance partial order on $\Lambda$, given by \eqref{dominance} with $p=1$, yields a partial ordering of the monomials $m_{\vect{\lambda}}(\vect{x})$.

The (monic) $A_{n-1}$ Macdonald polynomials can, in particular, be uniquely characterised as joint eigenfunctions of the difference operators $\hat{\cD}_r$ \eqref{cDr}. We now sketch the main steps in the proof of this important fact. First, the action of each $\hat{\cD}_r$ on the monomials $m_{\vect{\lambda}}(\vect{x})$ is shown to be lower-triangular:
$$
\hat{\cD}_r m_{\vect{\lambda}}(\vect{x})=c^r_{\vect{\lambda}\vect{\lambda}}m_{\vect{\lambda}}(\vect{x})+\sum_{\substack{\vect{\mu}\in\Lambda^+\\ \vect{\mu}\prec\vect{\lambda}}}c^r_{\vect{\lambda}\vect{\mu}}m_{\vect{\mu}}(\vect{x}),
$$
where the coefficients $c^r_{\vect{\lambda}\vect{\mu}}$ depend on $\alpha,g$ and
$$
c^r_{\vect{\lambda}\vect{\lambda}}=\cE_r(\vect{\rho}+\vect{\lambda}).
$$
Second, one verifies that, for generic (complex) parameter values, the diagonal elements $c^r_{\vect{\lambda}\vect{\lambda}}$ are pair-wise distinct:
$$
\cE_r(\vect{\rho}+\vect{\lambda})=\cE_r(\vect{\rho}+\vect{\mu})
\quad\Leftrightarrow\quad
\vect{\lambda}=\vect{\mu},\quad\vect{\lambda},\vect{\mu}\in\Lambda^+.
$$
(Here it suffices to consider a single difference operator $\hat{\cD}_r$.) Hence, since the difference operators $\hat{\cD}_r$ pair-wise commute, their actions on the monomials $m_{\vect{\lambda}}(\vect{x})$ can be simultaneously diagonalised, and the $A_{n-1}$ Macdonald polynomials can be defined as the unique (trigonometric) polynomials of the form
$$
P_{\vect{\lambda}}(\vect{x})=m_{\vect{\lambda}}(\vect{x})+\sum_{\substack{\vect{\mu}\in\Lambda^+\\ \vect{\mu}\prec\vect{\lambda}}}u_{\vect{\lambda}\vect{\mu}}m_{\vect{\mu}}(\vect{x}),\quad \vect{\lambda}\in\Lambda^+,
$$
that satisfy the system of difference equations
$$
\hat{\cD}_rP=\cE_r(\vect{\rho}+\vect{\lambda})P,\quad r=1,\dots,n-1.
$$

Note that the above proof is valid only for generic parameter values. Indeed, for special values of $\alpha,g$, the diagonal elements $c_{\vect{\lambda}\vect{\lambda}}^r$ may not be pair-wise distinct. Hence, as pointed out by van Diejen and Vinet \cite{vDV} (in the $p=1$ case), it is not entirely obvious that we can specialise the Macdonald polynomials $P_{\vect{\lambda}}$ to $\alpha,g$ along the hyperplanes \eqref{quant}. Restricting attention to $\vect{\lambda}\in\Lambda^+_{|M|}\equiv\Lambda^+_{1,|M|}$, which is sufficient for our purposes, the situation is readily clarified. Indeed, $\cE_r(\vect{u})=\cE_r(\vect{v})$ for all $r=1,\dots,n-1$ if and only if
$$
\vect{u}=\sigma(\vect{v}) \pmod{\tfrac{2\pi}{\alpha}Q}
$$
for some $\sigma\in S_n$. Since $\vect{\rho}+\Lambda^+_{|M|}\subset\overline{\Sigma}_{g,1}\subset\cA_n$ and each $S_n$-orbit in $\Lambda$ intersects $\Lambda^+$ only once, it follows that the $(n-1)$-tuples $(c^r_{\vect{\lambda}\vect{\lambda}})_{r=1}^{n-1}$ of diagonal elements are pair-wise distinct:
$$
\big(\cE_r(\vect{\rho}+\vect{\lambda})\big)_{r=1}^{n-1}=\big(\cE_r(\vect{\rho}+\vect{\mu})\big)_{r=1}^{n-1}
\quad\Leftrightarrow\quad
\vect{\lambda}=\vect{\mu},\quad \vect{\lambda},\vect{\mu}\in\Lambda^+_{|M|},
$$
which is enough to ensure that the actions of the difference operators $\hat{\cD}_r$ on the monomials $m_{\vect{\lambda}}$, $\vect{\lambda}\in\Lambda^+_{|M|}$, can be simultaneously diagonalised. The pertinent specialisation of the Macdonald polynomials $P_{\vect{\lambda}}$, $\vect{\lambda}\in\Lambda^+_{|M|}$, is therefore well-defined.

In what follows, we shall make use of the remarkable fact that the renormalised Macdonald polynomials
$$
\tilde{P}_{\vect{\lambda}}(\vect{x})=P_{\vect{\lambda}}(\vect{x})\prod_{\vect{a}\in A_{n-1}^+}\frac{(\langle\vect{a},\vect{\rho}\rangle:\sin_\alpha)_{\langle\vect{a},\vect{\lambda}\rangle}}{(\langle\vect{a},\vect{\rho}\rangle+g:\sin_\alpha)_{\langle\vect{a},\vect{\lambda}\rangle}},\quad\vect{\lambda}\in\Lambda^+,
$$
satisfy the self-duality relation
\begin{equation}\label{PselfDual}
\tilde{P}_{\vect{\lambda}}(\vect{\rho}+\vect{\mu})=\tilde{P}_{\vect{\mu}}(\vect{\rho}+\vect{\lambda}),\quad \vect{\lambda},\vect{\mu}\in\Lambda^+,
\end{equation}
which is due to Koornwinder \cite{Koo88} (see also Section VI.6 of \cite{Mac95}).

\subsection{An orthonormal eigenbasis}\label{subsec:eigbasis}
We proceed with the construction of an orthonormal eigenbasis of the discrete difference operators $\hat{\cS}_{1,M},\dots,\hat{\cS}_{n-1,M}$. In addition to the lattice function $\Delta_p$ \eqref{Delta}, the pertinent joint eigenfunctions $\Psi_{\vect{\lambda},p}$, $\vect{\lambda}\in\Lambda^+_{p,|M|}$, will be given in terms of a slight modification of the self-dual $A_{n-1}$ Macdonald polynomials $\tilde{P}_{\vect{\lambda}}$.

For $p\perp n$, we let $\sigma_p\in S_n$ be the unique element such that
$$
\sigma_p\big(\{\vect{\omega}_{1,p},\dots,\vect{\omega}_{n-1,p}\}\big)=\{\vect{\omega}_1,\dots,\vect{\omega}_{n-1}\},
$$
which maps $\Lambda^+_{p,|M|}$ one-to-one onto $\Lambda^+_{|M|}$. As noted in \eqref{VnuRefl}, the coefficient functions $V_{\vect{\nu}}(\vect{x})$ \eqref{Vnu} are invariant under the simultaneous reflections $(g,\vect{x})\to (-g,-\vect{x})$. In addition, we have the translation invariance
$$
V_{\vect{\nu}}(\vect{x}+(2\pi/\alpha)\vect{\mu})=V_{\vect{\nu}}(\vect{x}),\quad \vect{\mu}\in\Lambda.
$$
Thus, substituting $g\to \sgn(M)g$ and $\vect{x}\to \sgn(M)[\vect{x}+(2\pi/\alpha)(\vect{\omega}_{n-p+1,p}+\dots+\vect{\omega}_{n-1,p})]$ in the Macdonald-Ruijsenaars operators $\hat{\cD}_r$ \eqref{cDr} and appealing to the identity \eqref{cDrMtocDr}, we find that the (trigonometric) polynomials
\begin{equation}\label{Pmod}
\tilde{P}_{\sigma_p(\vect{\lambda})}\big(\sgn(M)g;\sgn(M)[\vect{x}+(2\pi/\alpha)(\vect{\omega}_{n-p+1,p}+\dots+\vect{\omega}_{n-1,p})]\big),\quad \vect{\lambda}\in\Lambda^+_p
\end{equation}
satisfy the difference equations
\begin{equation}\label{cDrMEqs}
\hat{\cD}_{r,M}\tilde{P}=\cE_r\big(\vect{\rho}(\sgn(M)g)+\sigma_p(\vect{\lambda})\big)\tilde{P},\quad r=1,\dots,n-1.
\end{equation}

Introducing the notation
$$
\vect{\check{\rho}}_p(g)\equiv \vect{\rho}_p(g) + \frac{2\pi}{\alpha}(\vect{\omega}_{n-p+1,p}+\dots+\vect{\omega}_{n-1,p}) = g\sum_{j=1}^{n-1}\vect{\omega}_{j,p},
$$
we define the lattice functions $\Psi_{\vect{\lambda},p}\colon \vect{\rho}_p+\sgn(M)\Lambda^+_{p,|M|}\to \C$ by letting
\begin{multline}\label{Psilamp}
\Psi_{\vect{\lambda},p}(g;\vect{\rho}_p+\sgn(M)\vect{\mu})\\
=\frac{1}{\cN_{\vect{0}}^{1/2}}\Delta_p(g;\vect{\lambda})^{1/2}\Delta_p(g;\vect{\mu})^{1/2} \tilde{P}_{\sigma_p(\vect{\lambda})}\big(\sgn(M)g;\vect{\check{\rho}}_p(\sgn(M)g)+\vect{\mu}\big),
\end{multline}
where $\vect{\lambda},\vect{\mu}\in\Lambda^+_{p,|M|}$. Note that this definition is consistent with the definition of $\Psi_{\vect{0},p}$ in \eqref{Psi_0}, since $\Delta_p(\vect{0})=1=\tilde{P}_{\vect{0}}$.

Observing
$$
\vect{\check{\rho}}_p(\sgn(M)g) = \sigma_p^{-1}\big(\vect{\rho}(\sgn(M)g)\big),
$$
the $S_n$-invariance and self-duality \eqref{PselfDual} of $\tilde{P}_{\vect{\lambda}}$ is readily seen to entail the following self-duality property.

\begin{proposition}
For $\vect{\lambda},\vect{\mu}\in\Lambda^+_{p,|M|}$, we have
\begin{equation}\label{PsiSelfDual}
\Psi_{\vect{\lambda},p}(\vect{\rho}_p+\sgn(M)\vect{\mu})=\Psi_{\vect{\mu},p}(\vect{\rho}_p+\sgn(M)\vect{\lambda}).
\end{equation}
\end{proposition}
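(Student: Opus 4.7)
The plan is to reduce the statement directly to Koornwinder's self-duality \eqref{PselfDual} for the renormalised $A_{n-1}$ Macdonald polynomials $\tilde{P}_{\vect{\lambda}}$. First I would observe that the prefactor $\cN_{\vect{0}}^{-1/2}\Delta_p(g;\vect{\lambda})^{1/2}\Delta_p(g;\vect{\mu})^{1/2}$ appearing in the definition \eqref{Psilamp} is manifestly symmetric under the exchange $\vect{\lambda}\leftrightarrow\vect{\mu}$. Hence \eqref{PsiSelfDual} is equivalent to the claim that the polynomial factor satisfies
$$
\tilde{P}_{\sigma_p(\vect{\lambda})}\big(\sgn(M)g;\vect{\check{\rho}}_p(\sgn(M)g)+\vect{\mu}\big) = \tilde{P}_{\sigma_p(\vect{\mu})}\big(\sgn(M)g;\vect{\check{\rho}}_p(\sgn(M)g)+\vect{\lambda}\big).
$$

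Second, I would exploit the identity $\vect{\check{\rho}}_p(\sgn(M)g) = \sigma_p^{-1}\big(\vect{\rho}(\sgn(M)g)\big)$ noted immediately before the proposition, together with the $S_n$-invariance (in the spectral variable $\vect{x}$) of each Macdonald polynomial $\tilde{P}_{\vect{\nu}}$, to rewrite the left-hand side as
$$
\tilde{P}_{\sigma_p(\vect{\lambda})}\big(\sgn(M)g;\sigma_p^{-1}(\vect{\rho}(\sgn(M)g))+\vect{\mu}\big) = \tilde{P}_{\sigma_p(\vect{\lambda})}\big(\sgn(M)g;\vect{\rho}(\sgn(M)g)+\sigma_p(\vect{\mu})\big),
$$
where I applied $\sigma_p$ to the argument, using $\sigma_p\big(\sigma_p^{-1}(\vect{\rho})+\vect{\mu}\big)=\vect{\rho}+\sigma_p(\vect{\mu})$.

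Third, since $\sigma_p$ maps $\Lambda_{p,|M|}^+$ one-to-one onto $\Lambda_{|M|}^+\subset\Lambda^+$, both $\sigma_p(\vect{\lambda})$ and $\sigma_p(\vect{\mu})$ are dominant weights, so Koornwinder's self-duality \eqref{PselfDual} applies (with coupling $\sgn(M)g$ throughout) and yields
$$
\tilde{P}_{\sigma_p(\vect{\lambda})}\big(\vect{\rho}+\sigma_p(\vect{\mu})\big) = \tilde{P}_{\sigma_p(\vect{\mu})}\big(\vect{\rho}+\sigma_p(\vect{\lambda})\big).
$$
A second application of $S_n$-invariance, now undoing $\sigma_p$ on the right-hand side, returns the expression $\tilde{P}_{\sigma_p(\vect{\mu})}\big(\sgn(M)g;\vect{\check{\rho}}_p(\sgn(M)g)+\vect{\lambda}\big)$, completing the proof.

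The only point that needs a moment's thought, and which I would address explicitly, is that Koornwinder's self-duality \eqref{PselfDual} is typically stated for generic parameters, whereas we are working at the quantisation hyperplane \eqref{quant}. However, the discussion preceding Subsection~\ref{subsec:eigbasis} already guarantees that the specialisation of $P_{\vect{\nu}}$ (and hence of $\tilde{P}_{\vect{\nu}}$) to such $\alpha,g$ is well defined for $\vect{\nu}\in\Lambda_{|M|}^+$, and \eqref{PselfDual} is an identity between entire functions of $\alpha,g$ whose normalising factors are regular on the relevant hyperplane. Thus no obstruction arises, and the argument above reduces the proposition to a short chain of definitional rewritings framing the known self-duality of Macdonald polynomials.
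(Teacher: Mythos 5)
Your proposal is correct and follows essentially the same route as the paper: the symmetric prefactor is dropped, the identity $\vect{\check{\rho}}_p(\sgn(M)g)=\sigma_p^{-1}(\vect{\rho}(\sgn(M)g))$ together with the $S_n$-invariance of $\tilde{P}_{\vect{\nu}}$ transfers the statement to the standard base, and Koornwinder's self-duality \eqref{PselfDual} finishes the argument. The paper compresses this into a single sentence, so your write-up (including the remark on specialising to the quantisation hyperplane) simply supplies the details the authors deem ``readily seen''.
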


The next result is easily deduced from the intertwining relations \eqref{iTRel} and the difference equations \eqref{cDrMEqs} (satisfied by the polynomials \eqref{Pmod}).

\begin{proposition}
Assuming that $p\perp n$ and $\alpha,g>0$ satisfy Conditions 1--2 of Lemma \ref{Lemma:Vnu}, we have, for all $\vect{\lambda}\in\Lambda^+_{p,|M|}$, the joint eigenfunction property
\begin{equation}\label{eigFunc}
\hat{\cS}_{r,M}\Psi_{\vect{\lambda},p}=\cE_r\big(\vect{\rho}(\sgn(M)g)+\sigma_p(\vect{\lambda})\big)\Psi_{\vect{\lambda},p},\quad r=1,\dots,n-1.
\end{equation}
\end{proposition}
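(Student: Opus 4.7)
My plan is to factor $\Psi_{\vect{\lambda},p}$ as the product of the ground state $\Psi_{\vect{0},p}$ with the lattice restriction of the shifted Macdonald polynomial in \eqref{Pmod}, apply the intertwining relation of Corollary \ref{Cor:iTRel} to transfer the action of $\hat{\cS}_{r,M}$ onto that polynomial factor, and then read off the eigenvalue from the difference equations \eqref{cDrMEqs}.

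Concretely, let $\Phi_{\vect{\lambda},p}$ denote the trigonometric polynomial \eqref{Pmod} on $E_n$ and let $\phi_{\vect{\lambda}}$ be its restriction to the lattice $\vect{\rho}_p+\sgn(M)\Lambda^+_{p,|M|}$. Using the identity $\vect{\check{\rho}}_p(\sgn(M)g)=\sgn(M)\vect{\check{\rho}}_p(g)$ together with the definitions of $\vect{\rho}_p$ and $\vect{\check{\rho}}_p$, a short computation gives
\[
\phi_{\vect{\lambda}}(\vect{\rho}_p+\sgn(M)\vect{\mu}) = \tilde{P}_{\sigma_p(\vect{\lambda})}\bigl(\sgn(M)g;\vect{\check{\rho}}_p(\sgn(M)g)+\vect{\mu}\bigr),
\]
so that \eqref{Psilamp} rewrites as the factorisation $\Psi_{\vect{\lambda},p}=\Delta_p(\vect{\lambda})^{1/2}\Psi_{\vect{0},p}\phi_{\vect{\lambda}}$. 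Applying Corollary \ref{Cor:iTRel} then yields
\[
\hat{\cS}_{r,M}\Psi_{\vect{\lambda},p} = \Delta_p(\vect{\lambda})^{1/2}\Psi_{\vect{0},p}\hat{\cD}_{r,M}\phi_{\vect{\lambda}}.
\]

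The remaining point is to identify $\hat{\cD}_{r,M}\phi_{\vect{\lambda}}$ with the evaluation of the full polynomial operator $\hat{\cD}_{r,M}\Phi_{\vect{\lambda},p}$ on the lattice. The restricted action, by its definition, sums only over those $\vect{\nu}\in S_n(\vect{\omega}_r)$ with $\vect{\mu}+\vect{\nu}\in\Lambda^+_{p,|M|}$; but by the vanishing property \eqref{Vzer}, every $\vect{\nu}$ sending $\vect{\mu}$ outside $\Lambda^+_{p,|M|}$ contributes the factor $V_{\vect{\nu}}(\vect{\rho}_p+\sgn(M)\vect{\mu})=0$. Hence the restricted sum coincides with the unrestricted one, and by \eqref{cDrMEqs} the latter equals $\cE_r(\vect{\rho}(\sgn(M)g)+\sigma_p(\vect{\lambda}))\Phi_{\vect{\lambda},p}$ at the point $\vect{\rho}_p+\sgn(M)\vect{\mu}$. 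Re-absorbing the factorisation gives \eqref{eigFunc}.

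The only non-routine bookkeeping is the verification that the $(2\pi/\alpha)$-shift in the definition of $\Phi_{\vect{\lambda},p}$ matches the $\vect{\check{\rho}}_p(\sgn(M)g)$ appearing in \eqref{Psilamp}, and that the polynomial identity \eqref{cDrMEqs} restricts compatibly to the lattice. The former is straightforward from the definition of $\vect{\check{\rho}}_p$ together with $\sgn(M)^2=1$, while the latter is precisely what \eqref{Vzer} guarantees, so I do not expect a real obstacle; the proof is ultimately a short manipulation anchored at Corollary \ref{Cor:iTRel} and \eqref{cDrMEqs}.
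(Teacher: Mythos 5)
Your proposal is correct and follows exactly the route the paper takes: the paper's own justification is the one-line remark that the result ``is easily deduced from the intertwining relations \eqref{iTRel} and the difference equations \eqref{cDrMEqs}'', and your argument simply spells out that deduction, including the (correct) checks that the shifted argument matches $\vect{\check{\rho}}_p(\sgn(M)g)+\vect{\mu}$ and that \eqref{Vzer} makes the restricted and unrestricted actions of $\hat{\cD}_{r,M}$ agree on the lattice.
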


Consequently, the lattice functions $\Psi_{\vect{\lambda},p}$ diagonalise the self-adjoint operators $\hat{\cH}_{r,M}$ \eqref{hHrM}. More precisely, appealing to the former identity in \eqref{cEIds}, we get
$$
\hat{\cH}_{r,M}\Psi_{\vect{\lambda},p}=E_r\big(\vect{\rho}(\sgn(M)g)+\sigma_p(\vect{\lambda})\big)\Psi_{\vect{\lambda},p},\quad r=1,\dots,n-1,
$$
with the eigenvalues given explicitly by
$$
E_r(\vect{u})\equiv \sum_{\vect{\nu}\in S_n(\vect{\omega}_r)}\cos\alpha\langle\vect{\nu},\vect{u}\rangle.
$$
Due to the self-duality relation \eqref{PsiSelfDual}, we have the very same result with respect to the spectral variable, namely
\begin{multline}\label{eigFunc2}
\sum_{\substack{\vect{\nu}\in S_n(\vect{\omega}_r)\\ \vect{\mu}+\vect{\nu}\in\Lambda^+_{p,|M|}}}W_{\vect{\nu}}(\vect{\rho}_p+\sgn(M)\vect{\mu})\Psi_{\vect{\mu}+\vect{\nu}}(\vect{\rho}_p+\sgn(M)\vect{\lambda})\\
+\sum_{\substack{\vect{\nu}\in S_n(\vect{\omega}_r)\\ \vect{\mu}-\vect{\nu}\in\Lambda^+_{p,|M|}}}W_{-\vect{\nu}}(\vect{\rho}_p+\sgn(M)\vect{\mu})\Psi_{\vect{\mu}-\vect{\nu}}(\vect{\rho}_p+\sgn(M)\vect{\lambda})\\
 = E_r\big(\vect{\rho}(\sgn(M)g)+\sigma_p(\vect{\lambda})\big)\Psi_{\vect{\mu}}(\vect{\rho}_p+\sgn(M)\vect{\lambda})
\end{multline}
for $r=1,\dots,n-1$. In the sense of Duistermaat and Gr\"unbaum \cite{DG86}, we have thus shown that $\Psi_{\vect{\lambda}}(\vect{\rho}_p+\sgn(M)\vect{\mu})$ satisfies a bispectral joint eigenvalue problem that is self-dual. This enables us to prove the following theorem.

\begin{theorem}
For $p\perp n$ and $\alpha,g>0$ satisfying Conditions 1--2 in Lemma \ref{Lemma:Vnu}, the lattice functions $\Psi_{\vect{\lambda},p}$ \eqref{Psilamp} form an orthonormal basis in $L^2(\vect{\rho}_p+\sgn(M)\Lambda^+_{p,|M|})$, that is
$$
(\Psi_{\vect{\lambda},p},\Psi_{\vect{\mu},p})_{p,M} = \delta_{\vect{\lambda}\vect{\mu}},\quad\vect{\lambda},\vect{\mu}\in\Lambda^+_{p,|M|},
$$
(where the Kronecker delta $\delta_{\vect{\lambda}\vect{\mu}}$ equals $1$ if $\vect{\lambda}=\vect{\mu}$ and $0$ otherwise).
\end{theorem}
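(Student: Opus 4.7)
The plan is to first prove pairwise orthogonality using the self-adjointness established in Proposition~\ref{Prop:adj} together with the distinctness of joint spectra, and then to upgrade orthogonality to orthonormality by leveraging the self-duality relation \eqref{PsiSelfDual} and the already-known value $\|\Psi_{\vect{0},p}\|_{p,M}^2=1$.

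For orthogonality, I would apply the duality $(\hat{\cS}_{r,M}\phi,\psi)_{p,M}=(\phi,\hat{\cS}_{n-r,M}\psi)_{p,M}$ to $\phi=\Psi_{\vect{\lambda},p}$ and $\psi=\Psi_{\vect{\mu},p}$, use the eigenfunction property \eqref{eigFunc}, and simplify by means of $\cE_{n-r}(\vect{u})=\cE_r(-\vect{u})$ and $\overline{\cE_r(\vect{u})}=\cE_r(-\vect{u})$ from \eqref{cEIds}. This produces
$$
\bigl[\cE_r\bigl(\vect{\rho}(\sgn(M)g)+\sigma_p(\vect{\lambda})\bigr)-\cE_r\bigl(\vect{\rho}(\sgn(M)g)+\sigma_p(\vect{\mu})\bigr)\bigr]\,(\Psi_{\vect{\lambda},p},\Psi_{\vect{\mu},p})_{p,M}=0
$$
for each $r=1,\dots,n-1$. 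Since $\sigma_p$ is a bijection $\Lambda^+_{p,|M|}\to\Lambda^+_{|M|}$, the pairwise distinctness of the $(n-1)$-tuples $\bigl(\cE_r(\vect{\rho}(\sgn(M)g)+\vect{\nu})\bigr)_{r=1}^{n-1}$ for $\vect{\nu}\in\Lambda^+_{|M|}$ (the natural analogue of the statement derived in Subsection~\ref{subsec:macdonald}) forces $(\Psi_{\vect{\lambda},p},\Psi_{\vect{\mu},p})_{p,M}=0$ whenever $\vect{\lambda}\neq\vect{\mu}$.

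For the normalisation, I would arrange the eigenfunction values into the square matrix $U$ indexed by $\Lambda^+_{p,|M|}$ with entries $U_{\vect{\mu},\vect{\lambda}}=\Psi_{\vect{\lambda},p}(\vect{\rho}_p+\sgn(M)\vect{\mu})$. Self-duality \eqref{PsiSelfDual} amounts precisely to $U=U^T$, and unpacking the inner product gives $(\Psi_{\vect{\lambda},p},\Psi_{\vect{\mu},p})_{p,M}=(U^T\overline{U})_{\vect{\lambda},\vect{\mu}}=(U\overline{U})_{\vect{\lambda},\vect{\mu}}$. The orthogonality established above therefore reads $U\overline{U}=D$, where $D$ is the real diagonal matrix with $D_{\vect{\lambda},\vect{\lambda}}=N_{\vect{\lambda}}:=\|\Psi_{\vect{\lambda},p}\|^2_{p,M}$. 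Taking complex conjugates gives $\overline{U}U=D$ as well, and eliminating $\overline{U}$ between these two relations shows that $UD=DU$. Comparing the $(\vect{\lambda},\vect{0})$-entries of this commutator identity yields $(N_{\vect{\lambda}}-N_{\vect{0}})\,\Psi_{\vect{0},p}(\vect{\rho}_p+\sgn(M)\vect{\lambda})=0$, and because $\Psi_{\vect{0},p}(\vect{\rho}_p+\sgn(M)\vect{\lambda})=\Delta_p(\vect{\lambda})^{1/2}/\cN_{\vect{0}}^{1/2}>0$ by Corollary~\ref{Cor:pos} (together with $\cN_{\vect{0}}=\sum_{\vect{\mu}}\Delta_p(\vect{\mu})>0$), one concludes $N_{\vect{\lambda}}=N_{\vect{0}}=1$ for every $\vect{\lambda}\in\Lambda^+_{p,|M|}$. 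The basis property is then automatic: the $\Psi_{\vect{\lambda},p}$ constitute $|\Lambda^+_{p,|M|}|$ orthonormal vectors in a Hilbert space of matching dimension by \eqref{dim}.

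The main technical obstacle is the spectral distinctness invoked at the end of the orthogonality step. For $p=1$ and $M>0$ it is immediate from the containment $\vect{\rho}(g)+\Lambda^+_{|M|}\subset\cA_n$ recalled in Subsection~\ref{subsec:macdonald}, but for general $p$ (and in particular when $M<0$) the shifted lattice need not sit inside the fundamental Weyl alcove, so one has to verify by hand that the points $\vect{\rho}(\sgn(M)g)+\sigma_p(\vect{\lambda})$, $\vect{\lambda}\in\Lambda^+_{p,|M|}$, lie in pairwise distinct orbits of the affine Weyl group $S_n\ltimes(2\pi/\alpha)Q$. I expect this to succeed via a case analysis using the bounds $\langle\vect{a},\sigma_p(\vect{\lambda})\rangle\in\{0,\dots,|M|\}$ for $\vect{a}\in A_{n-1}^+$ combined with the lower bounds on $2\pi/\alpha$ recorded in \eqref{bound-on-alpha}.
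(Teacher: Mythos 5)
Your orthogonality argument is exactly the paper's: duality \eqref{duality} plus the eigenvalue property \eqref{eigFunc} and the identities \eqref{cEIds} give the displayed relation, and pairwise distinctness of the joint spectra on $\vect{\rho}(\sgn(M)g)+\sigma_p(\Lambda^+_{p,|M|})$ finishes it. Your normalisation argument, however, is genuinely different. The paper works in the spectral variable: it uses the dual eigenvalue equation \eqref{eigFunc2} together with a multiplication operator $F_r$ to derive the two-term recurrence $W_{\vect{\omega}_{j,p}}(\vect{\rho}_p+\sgn(M)\vect{\mu})N_{\vect{\mu}+\vect{\omega}_{j,p}}=W_{-\vect{\omega}_{j,p}}(\vect{\rho}_p+\sgn(M)(\vect{\mu}+\vect{\omega}_{j,p}))N_{\vect{\mu}}$, observes via Lemma \ref{Lemma:Vnu} that the two coefficients are equal and non-zero, and propagates $N_{\vect{0}}=1$ through the cone one fundamental weight at a time. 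Your matrix argument --- $U=U^{T}$ from self-duality, $U\overline{U}=\overline{U}U=D$ from orthogonality, hence $UD=DU$, and then positivity of the $\vect{0}$-column $U_{\vect{\lambda},\vect{0}}=\Delta_p(\vect{\lambda})^{1/2}/\cN_{\vect{0}}^{1/2}$ forces $N_{\vect{\lambda}}=N_{\vect{0}}=1$ --- is correct and arguably cleaner: it needs only self-duality, Corollary \ref{Cor:pos} and $N_{\vect{0}}=1$, and it bypasses the non-vanishing of the hopping coefficients $W_{\vect{\omega}_{j,p}}$ and the step-by-step connectivity of $\Lambda^+_{p,|M|}$ that the paper's recurrence relies on. Both routes ultimately rest on the same two pillars (self-duality \eqref{PsiSelfDual} and positivity of $\Delta_p$), so the trade is one of packaging rather than substance.

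On the ``technical obstacle'' you flag at the end: you are right that for $p\geq 2$ or $M<0$ the set $\vect{\rho}(\sgn(M)g)+\Lambda^+_{|M|}$ need not lie in $\cA_n$, and the paper simply asserts the distinctness without further comment, so you are not missing anything it supplies. A cleaner resolution than your proposed case analysis is available: by $S_n$-invariance of $\cE_r$ the relevant points are $\vect{\check{\rho}}_p(\sgn(M)g)+\vect{\mu}$, which differ from the lattice points $\vect{\rho}_p+\sgn(M)\vect{\mu}\in\overline{\Sigma}_{g,p}\subset\cA_n$ by the fixed translation $\tfrac{2\pi}{\alpha}(\vect{\omega}_{n-p+1,p}+\dots+\vect{\omega}_{n-1,p})$ (composed with $-\mathrm{id}$ when $M<0$, which is harmless since $\cE_r(-\vect{u})=\cE_{n-r}(\vect{u})$); since translations by $\tfrac{2\pi}{\alpha}\Lambda$ normalise the affine Weyl group $S_n\ltimes\tfrac{2\pi}{\alpha}Q$, distinctness reduces to the fact that distinct points of the alcove lie in distinct affine Weyl orbits.
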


\begin{proof}
Specialising \eqref{duality} to $\phi=\Psi_{\vect{\lambda},p}$ and $\psi=\Psi_{\vect{\mu},p}$, the eigenfunction property \eqref{eigFunc} and identities \eqref{cEIds} entail
$$
\big[\cE_r\big(\vect{\rho}(\sgn(M)g)+\sigma_p(\vect{\lambda})\big)-\cE_r\big(\vect{\rho}(\sgn(M)g)+\sigma_p(\vect{\mu})\big)\big]\big(\Psi_{\vect{\lambda},p},\Psi_{\vect{\mu},p}\big)_{p,M} = 0.
$$
Recall that $\cE_r(\vect{u})=\cE_r(\vect{v})$ for all $r=1,\dots,n-1$ if and only if $\vect{u}=\sigma(\vect{v}) \pmod{(2\pi/\alpha)Q}$ for some $\sigma\in S_n$. Since no two distinct points of $\vect{\rho}(\sgn(M)g)+\sigma_p(\Lambda^+_{p,|M|})$ satisfy such a relation, orthogonality of the lattice functions $\Psi_{\vect{\lambda},p}$ follows.

Introducing the function
$$
F_r(\vect{u})\equiv E_r\big(\sgn(M)[\vect{u}+(2\pi/\alpha)(\vect{\omega}_{n-p+1,p}+\dots+\vect{\omega}_{n-1,p})]\big),
$$
the $S_n$-invariance of $E_r(\vect{u})$ implies
$$
F_r\big(\vect{\rho}_p(g)+\sgn(M)\vect{\lambda}\big)=E_r\big(\vect{\rho}(\sgn(M)g)+\sigma_p(\vect{\lambda})\big).
$$
For $j=1,\ldots,n-1$ let $r=1,\ldots,n-1$ be such that $\vect{\omega}_{j,p}\in S_n(\vect{\omega}_r)$. Using the eigenfunction property \eqref{eigFunc2} and orthogonality of the $\Psi_{\vect{\lambda},p}$ to rewrite both sides of the equality $(F_r\Psi_{\vect{\mu},p},\Psi_{\vect{\mu}+\vect{\omega}_{j,p},p})_{p,M}=(\Psi_{\vect{\mu},p},F_r\Psi_{\vect{\mu}+\vect{\omega}_{j,p},p})_{p,M}$, we deduce the recurrence relation
$$
W_{\vect{\omega}_{j,p}}(\vect{\rho}_p+\sgn(M)\vect{\mu})N_{\vect{\mu}+\vect{\omega}_{j,p}} = W_{-\vect{\omega}_{j,p}}(\vect{\rho}_p+\sgn(M)(\vect{\mu}+\vect{\omega}_{j,p}))N_{\vect{\mu}}
$$
for the (squared) norms
$$
N_{\vect{\mu}}\equiv (\Psi_{\vect{\mu},p},\Psi_{\vect{\mu},p})_{p,M}.
$$
By \eqref{W}, the identity $V_{\vect{\nu}}(-\vect{x})=V_{-\vect{\nu}}(\vect{x})$ and Lemma \ref{Lemma:Vnu}, we have
$$
W_{\vect{\omega}_{j,p}}(\vect{\rho}_p+\sgn(M)\vect{\mu}) = W_{-\vect{\omega}_{j,p}}(\vect{\rho}_p+\sgn(M)(\vect{\mu}+\vect{\omega}_{j,p}))\neq 0,\quad \vect{\mu},\vect{\mu}+\vect{\omega}_{j,p}\in\Lambda^+_{p,|M|}.
$$
Since (by definition) $N_{\vect{0}}=1$ and each $\vect{\mu}\in\Lambda^+_{p,|M|}$ is of the form $\vect{\mu}=\sum_{j=1}^{n-1}m_j\vect{\omega}_{j,p}$ with $m_j\in\N_0$, the claimed orthonormality follows.
\end{proof}

\section{Discussion}\label{sec:discussion}
In this paper, we have quantised the $n$-particle compactified trigonometric RS models for all type (i) values \eqref{typei} of the coupling constant $g$, thus generalising earlier work by van Diejen and Vinet \cite{vDV} pertaining to $0<g<2\pi/\alpha n$. The appropriate Hilbert space for the quantum system turned out to be a finite-dimensional space of complex-valued functions supported on a uniform lattice over the $(n-1)$-simplex $\overline{\Sigma}_{g,p}$ \eqref{bSigma}, that is the (local) configuration space of the classical $n$-particle system. We realised the quantum Hamiltonians as pair-wise commuting and self-adjoint discrete difference operators acting in the space of such lattice functions, and obtained the corresponding normalised joint eigenfunctions in terms of discretised $A_{n-1}$ Macdonald polynomials with unitary parameters.

There are several related open problems as well as promising directions for future research. First of all, the task of quantising compactified trigonometric RS systems with type (ii) couplings remains. Most of the arguments used here will not work, as they hinge on the simple structure of the classical configuration space and often require that the parameter $p$ be coprime to $n$. To handle even the simplest non-trivial case ($n=4,p=2$) seems challenging, although we have made some progress and hope to return to this problem in the near future. We recall that the discrete orthogonality relations obtained by van Diejen and Vinet \cite{vDV} for Macdonald polynomials with unitary parameters, specialising to results due to Kirillov \cite{Kir} for particular $g$-values, were generalised to Macdonald polynomials with unitary parameters associated with (admissible pairs of) irreducible reduced crystallographic root systems by van Diejen and Emsiz \cite{vDE}, and analogous results were obtained by van Diejen and Stokman \cite{vDS} for multivariable $q$-Racah polynomials. This lends support to the natural expectation that our results can be generalised to compactified RS systems attached to root systems other than $A_{n-1}$. Finally, it would also be of interest to investigate the compactified elliptic models recently constructed in \cite{FG}.

\section*{Acknowledgements}
We would like to thank L\'{a}szl\'{o} Feh\'{e}r for suggesting the problem and Simon Ruijsenaars for helpful and inspiring discussions. We are also grateful to L\'{a}szl\'{o}, Simon, and the anonymous referees for their useful comments.

We are grateful to the London Mathematical Society for a Research in Pairs grant, which provided financial support for a visit by the former author to Loughborough University in October 2016, when many of the results presented in this paper were obtained. We would also like to thank the Erwin Schr\"{o}dinger Institute in Wien for hospitality and financial support during the workshop ``Elliptic Hypergeometric Functions in Combinatorics, Integrable Systems and Physics'' in March 2017, which enabled us to make further progress.

T.F.G.~was also supported in part by the Hungarian Scientific Research Fund (OTKA) under the grant K-111697 and the \'{U}NKP-16-3 New National Excellence Program of the Ministry of Human Capacities.


\begin{thebibliography}{99}

\bibitem{Aom}
K.~Aomoto,
\emph{On elliptic product formulas for Jackson integrals associated with reduced root systems},
J.~Algebr. Comb. \textbf{8} (1998) 115-126;
\doi{10.1023/A:1008629309210}

\bibitem{vDE}
J.F.~van Diejen and E.~Emsiz,
\emph{Orthogonality of Macdonald polynomials with unitary parameters},
Math.~Z.~{\bf 276} (2014) 517-542;
\doi{10.1007/s00209-013-1211-4}

\bibitem{vDS}
J.F.~van Diejen and J.V.~Stokman,
\emph{Multivariable $q$-Racah polynomials},
Duke Math.~J.~{\bf 91} (1998) 89-136;
\doi{10.1215/S0012-7094-98-09106-2}

\bibitem{vDV}
J.F.~van Diejen and L.~Vinet,
\emph{The quantum dynamics of the compactified trigonometric Ruijsenaars-Schneider model},
Commun.~Math.~Phys.~{\bf 197} (1998) 33-74;
\doi{10.1007/s002200050442}

\bibitem{vDV00}
J.F.~van Diejen and L.~Vinet (eds.),
Calogero-Moser-Sutherland Models,
Springer, New York, NY, 2000;
\doi{10.1007/978-1-4612-1206-5}

\bibitem{DG86}
J.J.~Duistermaat and F.A.~Gr\"{u}nbaum,
\emph{Differential equations in the spectral parameter},
Commun.~Math.~Phys.~\textbf{103} (1986) 177-240;
\doi{10.1007/BF01206937}

\bibitem{FG}
L.~Feh\'{e}r and T.F.~G\"{o}rbe,
\emph{Trigonometric and elliptic Ruijsenaars-Schneider systems on the complex projective space},
Lett.~Math.~Phys.~\textbf{106} (2016) 1429-1449;
\doi{10.1007/s11005-016-0877-z}

\bibitem{FK}
L.~Feh\'{e}r and C.~Klim\v{c}\'{i}k,
\emph{Spectra of the quantized action variables of the compactified Ruijsenaars-Schneider system},
Theor.~Math.~Phys. \textbf{171} (2012) 704-714;
\doi{10.1007/s11232-012-0068-8}

\bibitem{FKl}
L.~Feh\'{e}r and T.J.~Kluck,
\emph{New compact forms of the trigonometric Ruijsenaars-Schneider system},
Nucl.~Phys.~B \textbf{882} (2014) 97-127;
\doi{10.1016/j.nuclphysb.2014.02.020}

\bibitem{Ito}
M.~Ito,
\emph{On a theta product formula for Jackson integrals associated with root systems of rank two},
J. Math. Anal. Appl. \textbf{216} (1997) 122-163;
\doi{10.1006/jmaa.1997.5665}

\bibitem{Kir} A.A.~Kirillov Jr.,
\emph{On an inner product in modular tensor categories},
J.~Amer.~Math.~Soc.~\textbf{9} (1996) 1135-1169;
\doi{10.1090/S0894-0347-96-00210-X }

\bibitem{Koo88}
T.H.~Koornwinder,
\emph{Self-duality for q-ultraspherical polynomials associated with the root system $A_n$},
unpublished manuscript, 1988.

\bibitem{Mac72}
I.G.~Macdonald,
\emph{The Poincar\'{e} series of a Coxeter group},
Math.~Ann.~\textbf{199} (1972) 161-174;
\doi{10.1007/BF01431421}

\bibitem{Mac95}
I.G.~Macdonald,
Symmetric Functions and Hall Polynomials,
second edition, Oxford University Press, Oxford, 1995.

\bibitem{Mac00}
I.G.~Macdonald,
\emph{Orthogonal polynomials associated with root systems},
S\'em.~Lothar.~Combin.~\textbf{45}, Art.~B45a (2000) 40pp.
\arXiv{math/0011046}{[math.QA]}

\bibitem{Mac}
I.G.~Macdonald,
\emph{A formal identity for affine root systems},
pp.~195-211
in: Lie groups and Symmetric Spaces, ed.~S.G.~Gindikin,
Amer.~Math.~Soc.~Transl.~Ser.~2 {210}, 2003.
\doi{10.1090/trans2/210/14}

\bibitem{Ne99}
N.~Nekrasov,
\emph{Infinite-dimensional algebras, many-body systems and gauge theories},
in Moscow Seminar in Mathematical Physics (eds. A.~Morozov, M.A.~Olshanetsky), pp. 263-299,
AMS, Providence, RI, 1999;
\doi{10.1090/trans2/191/09}

\bibitem{RS}
S.N.M.~Ruijsenaars and H.~Schneider,
\emph{A new class of integrable systems and its relation to solitons},
Ann.~Physics \textbf{170} (1986), 370-405;
\doi{10.1016/0003-4916(86)90097-7}

\bibitem{Rui87}
S.N.M.~Ruijsenaars,
\emph{Complete integrability of relativistic Calogero-Moser systems and elliptic function identities},
Commun.~Math.~Phys.~\textbf{110} (1987) 191-213;
\doi{10.1007/BF01207363}

\bibitem{Rui90}
S.N.M.~Ruijsenaars,
\emph{Finite-dimensional soliton systems},
in Integrable and Superintegrable Systems (ed. B.~Kupershmidt), pp. 165-206,
Singapore, World Scientific, 1990;
\doi{10.1142/9789812797179_0008}

\bibitem{Rui}
S.N.M.~Ruijsenaars,
\emph{Action-angle maps and scattering theory for some finite-dimensional integrable systems III. Sutherland type systems and their duals},
Publ.~RIMS \textbf{31} (1995) 247-353;
\doi{10.2977/prims/1195164440}

\bibitem{Rui99}
S.N.M.~Ruijsenaars,
\emph{Systems of Calogero-Moser type},
in Particles and Fields (eds. G.~Semenoff, L.~Vinet), pp. 251-352,
Springer, New York, NY, 1999;
\doi{10.1007/978-1-4612-1410-6_7}

\bibitem{Sto11}
J.V.~Stokman,
\emph{Macdonald-Koornwinder polynomials},
preprint (2011);
\arXiv{1111.6112}{[math.QA]}

\end{thebibliography}
\end{document}